\definecolor{linkcolour}{rgb}{0,0.2,0.8}% Define link and url color using rgb
\definecolor{refcolor}{rgb}{0.1,0.6,0.1}% Define citation color using rgb
\newtheorem{theorem}{Theorem}[section]
\newtheorem{proposition}{Proposition}[section]
\newtheorem{lemma}{Lemma}[section]
\newtheorem{definition}{Definition}[section]
\theoremstyle{remark}
\newtheorem{example}{Example}
\renewcommand{\v}[1]{\ensuremath{\mathbf{#1}}} 	% bold type for vectors \v{J} produces \mathbf{J}
\newcommand{\R}{\mathbb{R}}					% one-dim real numbers
\newcommand{\I}{\mathcal{I}}
\renewcommand{\S}{\mathcal{S}}
\renewcommand{\d}{\mathrm{d}} 				% "d" on integrals
\newcommand{\tu}{\tilde{u}}
\newcommand{\tsigma}{\tilde{\sigma}}
\newcommand{\grad}{\nabla}
\newcommand{\J}{\mathbf{J}}
\newcommand{\B}{\mathbf{B}}
\title[Stability in Conductivity Imaging]{Stability in Conductivity Imaging from partial measurements of one interior current}
\begin{document}

\author{Carlos Montalto}
%    Address of record for the research reported here
\address{Department of Mathematics, University of Washington, Seattle, WA 98195-4350, USA}
\email{montcruz@math.washington.edu}
%    \thanks will become a 1st page footnote.

%    Information for second author
\author{Alexandru Tamasan}\thanks{A. Tamasan was supported by the NSF Grant DMS 1312883.}

\address{Department of Mathematics, University of Central Florida, Orlando, Florida 32816}
\email{tamasan@math.ucf.edu}

\subjclass[2000]{35R30, 35J65, 65N21}

\date{}

\keywords{hybrid inverse problems, current density imaging, magnetic resonance, electrical impedance tomography, 1-Laplacian}

\begin{abstract}
We prove a stability result in the hybrid inverse problem of  recovering the electrical conductivity from partial
knowledge of one current density field generated inside a body by an imposed boundary voltage.
The region where interior data stably reconstructs the conductivity is well defined by a combination of the
exact and perturbed data.
\end{abstract}

\maketitle

\section{Introduction}

We consider the problem of recovering the electrical conductivity of a body from partial interior measurements of one current density field generated by an imposed boundary voltage. Originally, one sought to recover the conductivity solely from boundary measurements. Most generally formulated by Calder\'on \cite{alberto2006inverse}, the problem  has been extensively studied: the unique determination results  \cite{SylvesterUhlmann1987, Nachman1988,  Nachman1996, Bukhgeim2008,Astala-Paivatina:2006} are  known to
have weak (of logarithmic type) stability  estimates  \cite{Alessandrini1988}, which cannot be, in general, improved \cite{mandache2001exponential}. The need for determining the conductivity by more robust methods has lead to considering new problems in the conductivity imaging, which employ some interior data. Such interior knowledge is obtained by  coupling the electromagnetic model with the physics of some high resolution imaging method. For example,  magnetic resonance measurements are employed in  \cite{KwonLeeYoon2002:Equipotential, KwonWooYoonSeo2002:MREIT} to determine (one component) of the magnetic field inside, or in   \cite{zhang1992electrical,HasanovMaYoonNachmannJoy2004:NewApproachCDII, Lee2004:ReconstructionMREIT} to obtain the current  density field inside, whereas ultrasound measurements are employed in \cite{GebauerScherzer2008:UMEIT,Capdeboscq-Fehrenbach-Gournay2009imaging,Bal-Bonnetier-Monard-Faouzi2011inverse} to obtain the interior knowledge of power densities; see \cite{nachman-tamasan-timonov2011-CDII, Seo-Woo2011MREIT, Scherzer-Widlak:2012, Tamasan-Timonov2014} for some recent reviews.

This work concerns the problem of stability in determining the conductivity from partial knowledge of one current density inside. The interior data can be obtained from Magnetic Resonance measurements as discovered  in \cite{ScottJoy1991}. A first attempt to image the conductivity from the interior data of current density fields goes back to \cite{zhang1992electrical}. Since then several mathematical methods have been proposed, see  \cite{kim2002nonlinear, KwonLeeYoon2002:Equipotential, Nachman-Tamasan-Timonov2007conductivity, Nachman-Tamasan-Timonov2009MRIEIT, nachman2010reconstruction} for the isotropic case, and \cite{monard2012inverse, monard2013inverse, Bal-Guo-Monard:2014} for the anisotropic case.

Of  interest here, the work in \cite{Nachman-Tamasan-Timonov2007conductivity, Nachman-Tamasan-Timonov2009MRIEIT} shows that interior knowledge of the magnitude of one current density field generated by an imposed boundary voltage uniquely determine the conductivity inside. A general approach in \cite{Kuchment-Steinhauer2012} shows that using multiple measurements (two in the planar case) the linearized problem is stable, up to a finite dimensional kernel. Stability for the linear and non-linear problem from knowledge of the magnitude of one current density field was proved in \cite{montalto2013stability}. The second author would like to point out a small omission in \cite[Theorem 1.2]{montalto2013stability}, were the assumption that the voltage potentials must be equal at the boundary needs to be added. This is a natural assumption and it was clearly used in Equation (21) in there, but unfortunately omitted in the final version of the paper.

When the entire current density field is known,  a stability result has been obtained in  \cite{Kim-Lee:2014}. For the anisotropic conductivities  we mention the work in \cite{Bal-Guo-Monard:2014} which a proved stability  in two dimensions from full knowledge of four current density fields. All these stability results assume perturbations within the range of the data, locally near a given conductivity. If one assumes general perturbations in the interior data, the recent uniqueness result in \cite{Moradifam-Nachman-Tamasan2014:Uniqueness} combined with the structural stability in \cite{Nashed-Tamasan2010:Structural} shows that the voltage potential can still be stably recovered inside.  However,  the class of smoothness is insufficient to further yield stability for the conductivity.

In the case of partial interior data, unique determination of planar conductivity have been showed to be possible in certain subdomains (see the \emph{injectivity region} defined below) in \cite{nachman-tamasan-timonov2011-CDII}, but the stability question has been left open. While the voltage potential  can be stably recovered in certain subregions from partial interior data the  stability estimates obtained in \cite{VerasTamasanTimonov2015} do not guarantee the necessary regularity to extend to estimates for the conductivity.

In this paper we assume that knowledge of the current density field  is only partially available inside.  In fact we will show that knowledge of a certain component of the field suffices. We identify the specific subregions where conductivity can be stably recovered from partial interior and boundary data. The boundary voltage potential imposed to generate the current density field is assumed unperturbed, but need not be known everywhere. In general the stability region is a subset of the injectivity region as illustrated in Figure \ref{fig:visible-region-non-connected}. However, if the accessible part of the boundary  (where the voltage potential  is known)  is simply connected, then the injectivity region and the stability region coincide under some geometric (strict convexity) assumptions on the shape of the boundary.

Different from the work in \cite{montalto2013stability} we use a non-linear version of the decomposition of the data operator and identify a natural component of the current density field sufficient  to yield stability.  To authors' knowledge, this is the first result where either injectivity or stability is established from knowledge of just one component of the current density. The required component depends both on the measured and on the exact data.

Similar to all the works  in the hybrids methods, we also work with voltages free of singular points inside the domain. In dimension two, the condition of no critical points can be satisfied under the assumption that the boundary illumination is almost two-to-one \cite{Alessandrini1987:Critical, Nachman-Tamasan-Timonov2007conductivity}. In dimensions $n\geq 3$, it is still unclear if  similar conditions exists; the recent results in \cite{Honda-McLaughlin-Nakamura:2013, Alessandrini:2014} describing the set of singular points assume nonzero frequencies.

\section{Statement of the results}

Let $\Omega\subset\R^n$ be a bounded domain $C^{2,\alpha}$-diffeomorphic with the unit ball, $0<\alpha<1$. This assumption can be relaxed as explained in the proofs. Let $\sigma$ be a positive function in $C^{2,\alpha}(\overline{\Omega})$ and let $f\in C^{2,\alpha}({\partial \Omega)}$. From Schauder's regularity theory we know that the unique solution $u$ of the Dirichlet problem
\begin{align}
\label{Dirchlet-Problem}
\grad \cdot \sigma \grad u =0 \quad \mbox{in } \Omega, \quad
\qquad u|_{\partial \Omega} = f. \end{align}
is in $ C^{2,\alpha}(\overline{\Omega})$, see e.g. \cite[Theorem 6.18]{gt-2001}. We refer to such a $u$ as being $\sigma$-harmonic.  The corresponding {current density} vector field  $\J:C^{1,\alpha}(\overline{\Omega}) \to C^{1,\alpha}(\bar \Omega)$ is defined by

\begin{equation} \label{eq:current-density}
\J(\sigma) = - \sigma\nabla u.
\end{equation}

In order to state our partial data results we introduce the following notation.
For some $u \in C^1(\overline{\Omega})$ arbitrary (not necessarily $\sigma$-harmonic) with $|\grad u| > 0$ in $\overline{\Omega}$,  and $p$ an  arbitrary point in $\Omega$, let $\gamma_p$ denote the integral curve starting at $p$ in the direction of $\grad u$. Since $|\grad u| > 0$, there exist a first time, denoted by $t^+_p$ (resp. $t_p^-$), such that the integral curve starting at $p$ and moving in the direction  $\pm\grad u$  hits the boundary. We denote by
\[
l_p^\pm = \{\gamma_p(t): t\in[0,t_p^\pm] \}
\]
the segment of the integral curve from $t=0$ to $t=t_p^\pm$. Also, let
\[
\Sigma_p = \{ x \in \overline{\Omega}: u(x) = u(p)\}
\]
denote the level curve of $u$ passing through $p$, see Figure \ref{fig:sigma-p-and-lp}.

\begin{figure}[h] % float placement: (h)ere, page (t)op, page (b)ottom, other (p)age
	\centering
 	\def\svgwidth{7cm}
	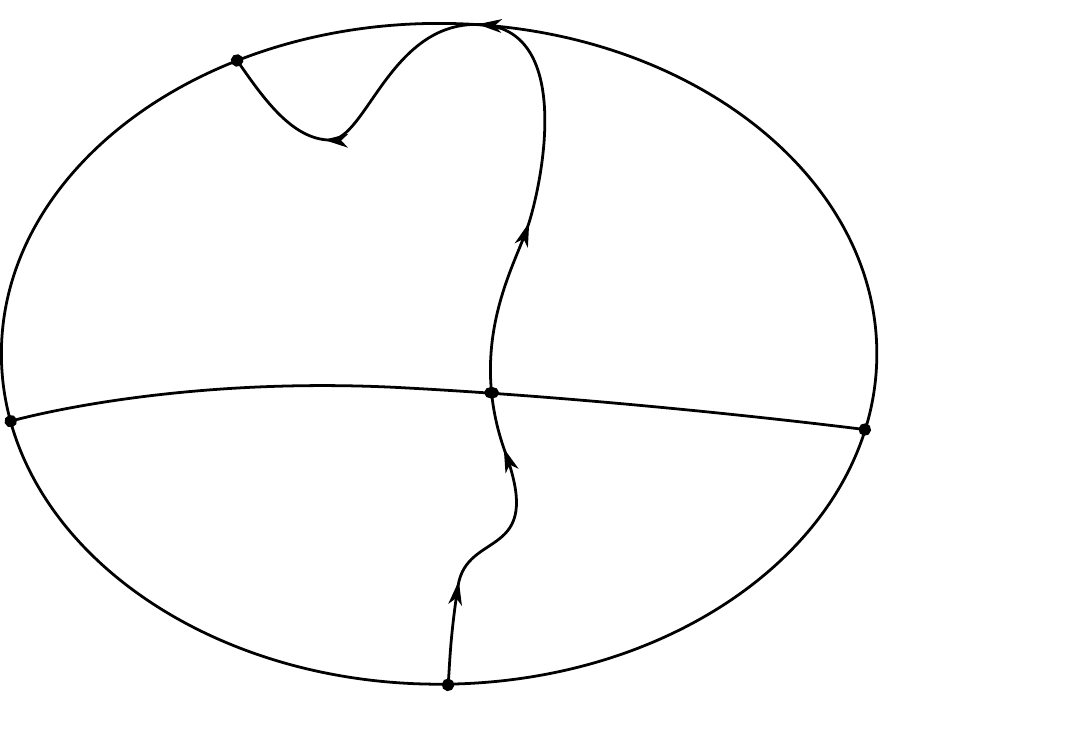	\caption{Illustration of the segment $l_p$ and the surface $\Sigma_p$.}
	\label{fig:sigma-p-and-lp}
\end{figure}

Let  $\Gamma$ be an open subset of the boundary to denote the accessible part. The following definitions specify the  interior subdomains, where the unique and stable determination of the conductivity can be guaranteed.
\begin{definition} Let $\Omega \subset \R^n$ be a bounded domain $C^{2,\alpha}$-diffeomorphic to the unit ball and let $u \in C^{2,\alpha}(\overline{\Omega})$, where $0<\alpha<1$. 
 \label{def:visible-region}
\begin{description}
\item[Injectivity region] We say that a point $p\in\overline\Omega$ is \emph{visible from $\Gamma$ along the equipotential set}  if
\[
\Sigma_p \cap \partial \Omega \subset \Gamma.
\]
The set $\I(\Gamma,u)$ of points in $\overline{\Omega}$ that are visible from $\Gamma$ along equipontential sets is called the \emph{injectivity region},
\begin{equation}
{\mathcal I}(\Gamma,u) := \{p \in \overline{\Omega}: \Sigma_p \cap \partial \Omega \subset \Gamma \}.
\end{equation}
\item[Stability region] We say that the trajectory through $p$  along  $\nabla u$ \emph{is visible from $\Gamma$ } if either
$
l^+_p \subset \I(\Gamma,u) \mbox{ or } l^-_p \subset \I(\Gamma,u).
$ The set $\S(\Gamma,u)$  of points in $\overline{\Omega}$ for which the corresponding trayectories  along $\nabla u$ are visible from $\Gamma$ is called the \emph{stability region} , i.e.,
\begin{equation}
{\mathcal S}(\Gamma, u) := \{p \in\I(\Gamma,u):  l^+_p \subset \I(\Gamma,u) \mbox{ or } l^-_p \subset \I(\Gamma,u) \}.
\end{equation}
\end{description}
\end{definition}

Clearly $\S(\Gamma,u) \subseteq \I(\Gamma,u)$ and, if $\Gamma$ is connected, the equality can hold as illustrated in Figure \ref{fig:stability-region}. In Figure \ref{fig:visible-region-non-connected} we illustrate the injectivity and stability regions, when $\Gamma$ is not connected.
\begin{figure}[h!] % float placement: (h)ere, page (t)op, page (b)ottom, other (p)age
	\centering
 	\def\svgwidth{8cm}
 	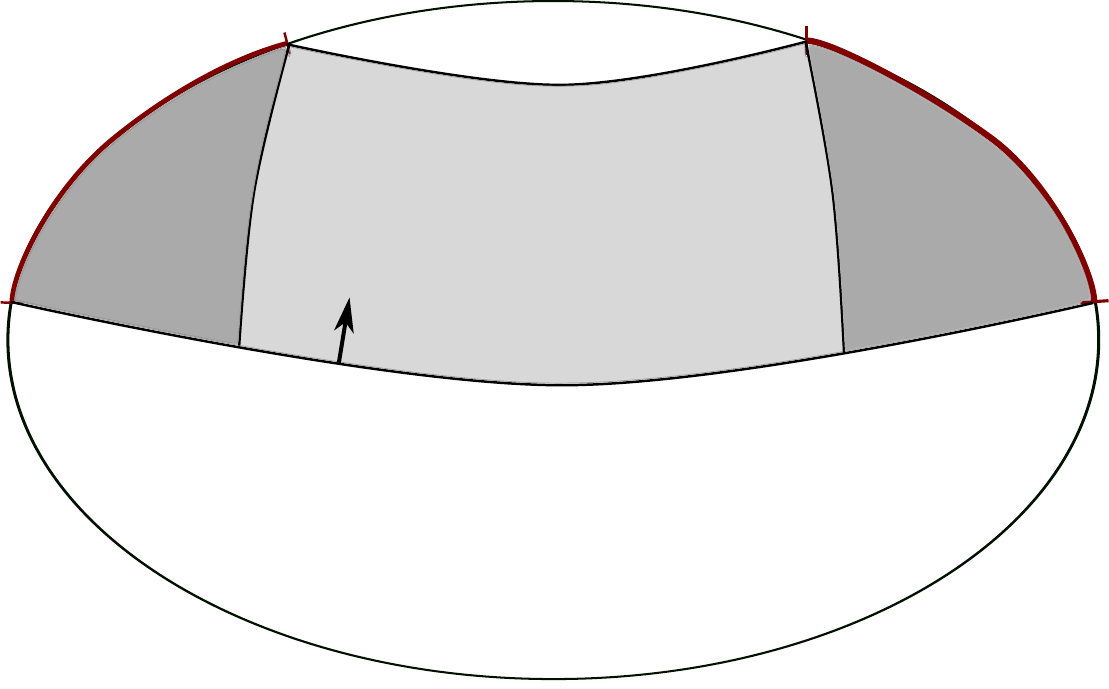
 	\caption{The injectivity region $\I(\Gamma,u)$ is the light grey region that contains the stability region $\S(\Gamma,u)$ in dark grey. }
  	\label{fig:visible-region-non-connected}
\end{figure}

\begin{figure}[h] % float placement: (h)ere, page (t)op, page (b)ottom, other (p)age
	\centering
 	\def\svgwidth{7.5cm}
 	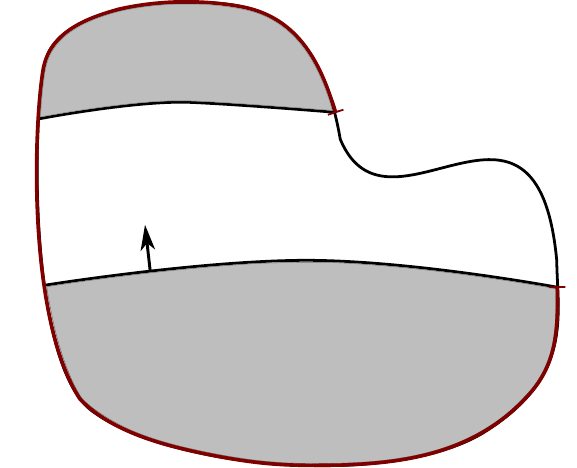
 	\caption{When $\Gamma$ is connected the visible region and the trajectory region can be the same.}
  	\label{fig:stability-region}
\end{figure}

Let $\tilde u$ be the voltage potential corresponding to some perturbed conductivity $\tilde\sigma\in C^{1,\alpha}(\overline{\Omega})$,
\begin{align}
\label{Dirchlet-Problem_perturbed}
\grad \cdot \tilde\sigma \grad \tilde u =0 \quad \mbox{in } \Omega, \quad
\qquad \tilde u|_{\partial \Omega} = f, \end{align}
and let $\J(\tilde\sigma) := - \tilde\sigma\nabla \tilde u$ be the corresponding {current density field}. Let
\begin{equation}\label{deltas}
\delta\sigma:=\sigma-\tsigma\quad\mbox{and}\quad\delta\J:=\J-\tilde\J
\end{equation}denote the corresponding perturbations.

%Translated in our notations, the two dimensional uniqueness result in \cite{nachman2010reconstruction} states that  $\delta\J=0$  yields $\delta\sigma = 0 $ in the injectivity region $\I(\Gamma, u)$.

We prove that $\delta\sigma$ is controlled by the component of $\delta\J$ in the direction $\nabla (u+\tu$) throughout $\S(\Gamma, u+\tu)$. To the author's knowledge the  result below is the first stability result from partial interior data. It is also the first time the recovery is done from just one component of the current density field.

%The injectivity region $\I(\Gamma,u)$ from $\Gamma$ is the set of all points that are visible from $\Gamma$ in the transversal direction to $\grad u$; and the the stability region $\S(\Gamma,u)$ from $\Gamma$ is the set of all points that are visible from $\Gamma$ in transversal and parrallel direction to $\grad u $ and for which one of the segments $l_p^\pm$ is contained in the injectivity region $\I(\Gamma,u+\tu)$. Here we called $\I(\Gamma,u+\tu)$ ``injectivity region'' because in dimension $n=2$ is the region were $\sigma$ can be uniquely determined from $\Gamma$ and the magnitude of the current density as shown in \cite{nachman2010reconstruction}. To the author's knowledge there is still an open question whether in dimension $n>2$ there is uniqueness in the region $\I(\Gamma, u+\tu)$ with some apriori information of the current density.

For a vector field $\v w_0$ in $\Omega$  we denote the orthogonal projection onto $\v w_0$  by  $\Pi_{\v w_0}$, and onto the orthogonal complement (both in the Euclidean metric) by $\Pi^\perp_{\v w_0} := \mbox{Id } - \Pi_{\v w_0} $ the orthogonal projection; i.e., for an arbitrary vector field ${\v w}$,
\begin{equation}
\Pi_{\v w_0} \v w = \frac{\v w_0 \cdot \v w}{|\v w_0|^2} \v w_0 \quad \mbox{ and } \quad \Pi^\perp_{\v w_0}{\v w} = {\v w} - \Pi_{\v w_0}{\v w}.
\end{equation}

\begin{figure}[h] % float placement: (h)ere, page (t)op, page (b)ottom, other (p)age
	\centering
 	\def\svgwidth{5.5cm}
 	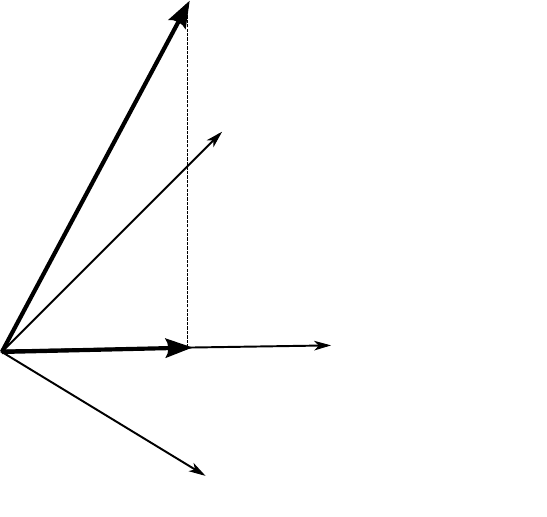
 	\caption{From knowledge to the scalar component of $\v w = \Pi_{\grad (u + \tu)}(\delta \J) $ we can stably recover the difference in the conductivities $\sigma - \tilde{\sigma}$. Here $\delta \J$ represents the difference $\J(\sigma) - \J(\tsigma)$.}
  	\label{fig:projection}
\end{figure}

\begin{theorem}[Stability with Partial Data] \label{theor:global-stability-CDII}
 Let $\Omega\subset\R^n$ be $C^{2,\alpha}$-diffeomorphic to the unit ball, and $\sigma, \tsigma \in C^{1,\alpha}(\overline{\Omega})$ be positive on $\overline{\Omega}$, for some $0<\alpha<1$. Let $u$ (respectively $\tu$) in $C^{2,\alpha}(\overline{\Omega})$ be $\sigma$(respectively $\tsigma$)-harmonic.  Let  $\Gamma\subseteq \partial \Omega$ be the union of finitely many open connected components, and  $\Gamma '\subset\subset\Gamma $ be a open subset compactly contained in $\Gamma$. Assume that
\begin{equation} \label{eq:bound-cond-sigma-and-u}
\sigma|_{\Gamma} = \tsigma|_{\Gamma}, \; u |_\Gamma = \tu|_\Gamma,
\end{equation}
and
\begin{equation}\label{nonsingularity}
|\grad (u + \tu)|>0,\quad\text{in} \quad \overline{\I(\Gamma', u+\tu)}.
\end{equation}
Then there exist $C> 0$ dependent on a lower bound ot  $|\nabla (u+ \tu)|$ in $\overline{\I(\Gamma', u+\tu)}$, the domain $\Omega$, and  the $C^1(\overline\Omega)$- norm of $\sigma$ and $\tsigma$, such that
\begin{equation}\label{J-Projection-Estimate}
\| \sigma - \tsigma \|_{L^2(\S(\Gamma', u+\tu))} \leq C \left|\left| \grad \cdot(\Pi_{\grad (u + \tu)} (\J(\sigma) - \J(\tsigma)) ) \right|\right|^{\frac{\alpha}{2+\alpha}}_{L^2(\I(\Gamma', u+\tu))}.
\end{equation}
In particular,
\begin{equation}\label{J-Scalar-Projection-Estimate}
\| \sigma - \tsigma \|_{L^2(\S(\Gamma', u+\tu))} \leq C \left|\left| \Pi_{\grad (u + \tu)}( \J(\sigma) - \J(\tsigma) )\right|\right|^{\frac{\alpha}{2+\alpha}}_{H^1(\I(\Gamma', u+\tu))}.
\end{equation}
\end{theorem}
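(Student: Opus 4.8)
The plan is to reduce \eqref{J-Projection-Estimate} to a linear transport equation along the integral curves of $\grad(u+\tu)$, convert the resulting pointwise bound into an $L^2$ estimate by a co-area argument adapted to the flow, and finally upgrade to the fractional power $\tfrac{\alpha}{2+\alpha}$ by interpolating against the a priori $C^{1,\alpha}$ bounds. Throughout I abbreviate $\delta\sigma=\sigma-\tsigma$, $\delta u=u-\tu$, and $\phi=u+\tu$.

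First I would record the algebraic decomposition of the data operator. The current difference splits as
\[
\J(\sigma)-\J(\tsigma)=-\big(\delta\sigma\,\grad u+\tsigma\,\grad\delta u\big),
\]
and since both currents are divergence free, $\grad\cdot(\J(\sigma)-\J(\tsigma))=0$. Writing $\Pi_{\grad\phi}(\J(\sigma)-\J(\tsigma))=s\,\grad\phi$ with scalar $s=\grad\phi\cdot(\J(\sigma)-\J(\tsigma))/|\grad\phi|^2$, the left factor in \eqref{J-Projection-Estimate} becomes
\[
\beta:=\grad\cdot\big(\Pi_{\grad\phi}(\J(\sigma)-\J(\tsigma))\big)=\grad s\cdot\grad\phi+s\,\Delta\phi .
\]
By the nonsingularity hypothesis \eqref{nonsingularity} the field $\grad\phi$ has well-defined integral curves transversal to the equipotentials on $\overline{\I(\Gamma',u+\tu)}$, and along such a curve $\gamma$ the identity above reads exactly as the first-order ODE $\tfrac{d}{dt}s(\gamma(t))+s\,\Delta\phi=\beta$. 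This is the central equation, with $\Delta\phi\in C^{\alpha}$ a bounded coefficient.

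Next I would integrate this ODE from the boundary. The regions are arranged precisely so that for $p\in\S(\Gamma',u+\tu)$ at least one of $l^{\pm}_p$ lies in $\I(\Gamma',u+\tu)$ and terminates at a point of $\Gamma'$, where $\delta\sigma$ and $\delta u$ vanish by \eqref{eq:bound-cond-sigma-and-u}. Grönwall's inequality with integrating factor $\exp(\int\Delta\phi)$ then bounds $s(p)$ pointwise by the line integral $\int_{l^{\pm}_p}|\beta|$. The key structural point I would exploit is that $\grad\delta u\cdot\grad\phi=\tfrac{d}{dt}\delta u(\gamma(t))$ along the flow, so that the voltage contribution hidden inside $s$ telescopes and is controlled through the vanishing terminal values of $\delta u$; this is the ``non-linear decomposition'' that isolates the genuine $\delta\sigma$-part of the projected data. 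Squaring, integrating over $\S(\Gamma',u+\tu)$, and applying the co-area formula in coordinates adapted to the flow of $\grad\phi$ (whose Jacobian is bounded above and below by the lower bound on $|\grad(u+\tu)|$ and the $C^1$ norms) converts the pointwise estimate into $\|\delta\sigma\|_{L^2(\S)}\lesssim\|\beta\|_{L^2(\I)}$, modulo error terms. Those errors are quadratic in the perturbations and carry an extra derivative, so they cannot be dominated by $\|\beta\|_{L^2}$ alone; interpolating the weak bound just obtained against $\|\sigma\|_{C^{1,\alpha}}+\|\tsigma\|_{C^{1,\alpha}}\le M$ and balancing the two contributions is what produces the exponent $\tfrac{\alpha}{2+\alpha}$ and the stated dependence of $C$. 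Finally, \eqref{J-Scalar-Projection-Estimate} is immediate from \eqref{J-Projection-Estimate} because $\|\grad\cdot\v v\|_{L^2}\le C\|\v v\|_{H^1}$.

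I expect the main obstacle to be exactly the decoupling of $\delta\sigma$ from the voltage perturbation $\grad\delta u$ in the transport identity. The naive terminal value of $s$ on $\Gamma$ involves the uncontrolled normal-derivative difference $\partial_\nu\delta u$ (the Neumann-data jump), which is not bounded by $\|\beta\|_{L^2(\I)}$; the argument closes only because the telescoping of $\grad\delta u\cdot\grad\phi$ along the flow, the buffer $\Gamma'\subset\subset\Gamma$ (permitting interior elliptic control of $\grad\delta u$ near the terminal points), and the final interpolation act together to tame this term. Making these three ingredients quantitatively compatible, with constants depending only on $M$, $\Omega$, and the lower bound of $|\grad(u+\tu)|$, is the technical heart of the proof.
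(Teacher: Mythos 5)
Your outline assembles several of the right ingredients (the transport along the flow of $\grad(u+\tu)$ starting from $\Gamma'$ where the data vanish, the role of the stability region, the interpolation producing the exponent $\tfrac{\alpha}{2+\alpha}$), but it has a genuine gap at exactly the point you flag as ``the technical heart,'' and the three ingredients you invoke there do not close it. Writing $s=\grad\phi\cdot(\J(\sigma)-\J(\tsigma))/|\grad\phi|^2$, the identity \eqref{Projection-Diff-of-Current-Density} gives $2s=(\sigma-\tsigma)+(\sigma+\tsigma)\,\grad(u-\tu)\cdot\grad\phi/|\grad\phi|^2$, so a pointwise bound on $s$ controls $\delta\sigma$ only up to the term in $\grad\delta u$. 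Your proposed ``telescoping'' of $\grad\delta u\cdot\grad\phi=\tfrac{d}{dt}\delta u(\gamma(t))$ along $l_p^{\pm}$ does not eliminate this term: first, it is multiplied by the nonconstant coefficient $(\sigma+\tsigma)/|\grad\phi|^2$, so integrating by parts along the curve produces $\delta u$ against derivatives of that coefficient; second, even in the clean case the integral evaluates to $\delta u$ at the \emph{interior} endpoint $p$, not at the boundary, and the interior values of $\delta u$ are not data. Nor can you recover them by ``interior elliptic control of $\grad\delta u$ near the terminal points'': $\delta u$ satisfies $\nabla\cdot(\sigma+\tsigma)\nabla\delta u=-\nabla\cdot(\delta\sigma\,\nabla(u+\tu))$, whose source is the unknown $\delta\sigma$ itself, so this is circular.

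The missing idea, which is the actual core of the paper's argument, is that $2\grad\cdot\Pi_{\grad(u+\tu)}(\J(\sigma)-\J(\tsigma))=L(u-\tu)$ where $L$ is a second-order operator that is \emph{elliptic along the equipotential surfaces} of $u+\tu$ (Propositions \ref{prop:decomp-non-linear-problem} and \ref{Prop:local_coordinates_L}). This is where the injectivity region earns its definition: for $p\in\I(\Gamma',u+\tu)$ the level set $\Sigma_p$ meets $\partial\Omega$ only inside $\Gamma'$, where $u-\tu=0$, so a Poincar\'e inequality holds on each $\Sigma_p$ and integration by parts against $\delta u$ yields $\|u-\tu\|_{L^2(\I')}\leq C\|L(u-\tu)\|_{L^2(\I')}$ (Lemma \ref{basicEstimate2}). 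That estimate is the only mechanism in the proof that converts the data into interior control of $\delta u$; the transport along the flow (the paper runs it for $\delta\sigma$ directly, with source $G=-\nabla\cdot(\sigma+\tsigma)\nabla\delta u$, giving $\|\delta\sigma\|_{L^2(\S')}\leq C\|u-\tu\|_{H^2(\S')}$) and the interpolation of $H^2$ between $L^2$ and $H^{2+\alpha}$ then chain onto it. Without the tangential ellipticity of $L$ and the level-set Poincar\'e inequality, your argument leaves the term $\delta u(p)$ uncontrolled and does not yield \eqref{J-Projection-Estimate}. Your last step, deducing \eqref{J-Scalar-Projection-Estimate} from \eqref{J-Projection-Estimate}, is fine.
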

The proof of the Theorem \ref{theor:global-stability-CDII} is presented in Section \ref{sec:proofofMainResult}.

A few remarks are in order. Note first, that the estimates \eqref{J-Projection-Estimate} and \eqref{J-Scalar-Projection-Estimate} concern norms over (possibly) different  domains. However, in many interesting situation $\S(\Gamma', u+\tu) = \I(\Gamma', u+\tu)$ as illustrated in Figure \nolinebreak \ref{fig:stability-region}.
In particular,  in the full data case with $\Gamma = \partial \Omega$, we have $\S(\Gamma',u +\tu) = \I(\Gamma', u+\tu) = \Omega$, and the estimate
\eqref{J-Scalar-Projection-Estimate} yields

\begin{equation}\label{eq:Stability-Whole-Current-Density}
\| \sigma - \tsigma \|_{L^2(\Omega)} \leq C \|\J(\sigma) - \J(\tsigma) \||^{\frac{\alpha}{2+\alpha}}_{H^1(\Omega)}.
\end{equation}
%Notice that the r.h.s. of estimate \eqref{eq:Stability-Whole-Current-Density} is independent of $u+\tu$, but it requires knowledge of the whole current density, i.e., $n$-dimensional information.

The stability estimate requires knowledge of a component of $\J$ in a direction $\grad(u+\tu)$ that is well defined by the exact and perturbed data due to the uniqueness results in \cite{Nachman-Tamasan-Timonov2009MRIEIT}. However, if $\tsigma$ is a priori close to $\sigma$ then the level curves of $\tu$ will be close to the level curves $u$, and it will suffice to project  $\delta\J$ onto the $\grad u$ with a penalty term that will depend on the apriori closeness assumption, as illustrated in Figure \ref{fig:controlled-stability-from-boundary}. More precisely, we have the following local stability result.

%Hence we can work with an stability region and a projection of $\J$ that could be better controlled from the boundary. What we mean with that, is that we can choose voltages $f$ as to induced level set of $u_0$, by doing so, we would have a better understanding of the required direction of the current density $\mathbf{J}$, as illustrated in Figure \ref{fig:controlled-stability-from-boundary}.

\begin{figure}[h] % float placement: (h)ere, page (t)op, page (b)ottom, other (p)age
	\centering
 	\def\svgwidth{9cm}
 	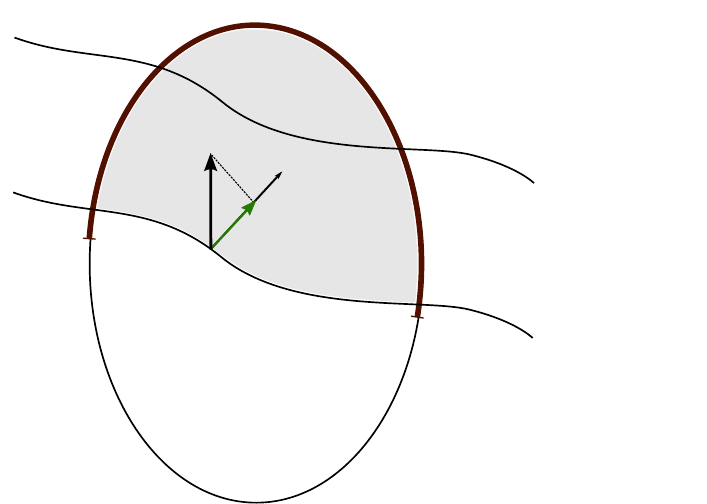
 	\caption{We illustrate how we can controlled the visible region  and the projection of the current density. The underlying idea is to chose voltages potential $f$ at the boundary to induced specific level curves $u_0 =$const. By doing so, we get a better understanding of the required direction of the current density to obtain an stable reconstruction. Here, $\delta \J_0$ denote $\J(\sigma) - \J(\sigma_0)$  and  $\v w_0 = \Pi_{\grad (u_0)}(\delta \J_0) $.}
  	\label{fig:controlled-stability-from-boundary}
\end{figure}

\begin{theorem}\label{theorem:controlled-projection}
Let $\sigma\in C^{1,\alpha}(\overline{\Omega})$,  $0<\alpha<1$, be positive in $\overline{\Omega}$ and $u$ be $\sigma$-harmonic with $|\nabla u|>0$ in $\overline\Omega$. There exists an $\epsilon>0$ depending on $\Omega$ and some  $C > 0$ depending on $\epsilon$, such that the following holds: If $\tsigma\in C^{1,\alpha}(\overline{\Omega})$ with
\begin{equation}\label{samesigmaOnGamma}
\|\sigma -\tsigma \|_{C^{1,\alpha}(\overline{\Omega})} < \epsilon\quad\mbox{and}\quad\sigma|_{\partial \Omega} =\tsigma|_{\partial \Omega},
\end{equation}
and $\tu$ is the $\tsigma$-harmonic map with
\begin{equation}\label{sameBC}
 \tu|_{\partial\Omega}= u |_{\partial \Omega},
\end{equation}
then

\begin{equation} \label{eq:satbility-controlled-projection}
\| \sigma - \tsigma\|_{L^2(\Omega)} \leq C \left|\left| \grad \cdot(\Pi_{\grad u} (\J(\sigma) - \J(\tsigma) ) \right|\right|^{\frac{\alpha}{2+\alpha}}_{L^2(\Omega)}.
\end{equation}
In particular,
\[
\| \sigma - \tsigma \|_{L^2(\Omega)} \leq C \left|\left| \Pi_{\grad u}(\J(\sigma) - \J(\tsigma)) \right|\right|^{\frac{\alpha}{2+\alpha}}_{H^1(\Omega)}.
\]
\end{theorem}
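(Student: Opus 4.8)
The plan is to reduce to the full-data case of Theorem~\ref{theor:global-stability-CDII} and then trade the direction $\grad(u+\tu)$ for $\grad u$, paying a penalty controlled by the a~priori closeness $\|\sigma-\tsigma\|_{C^{1,\alpha}}<\epsilon$. First I would record the continuous dependence of the potential on the coefficient: subtracting \eqref{Dirchlet-Problem_perturbed} from \eqref{Dirchlet-Problem} shows that $w:=u-\tu$ solves $\grad\cdot(\sigma\grad w)=-\grad\cdot(\delta\sigma\,\grad\tu)$ with $w|_{\partial\Omega}=0$, whence Schauder estimates give $\|w\|_{C^{2,\alpha}(\overline{\Omega})}\le C\|\delta\sigma\|_{C^{1,\alpha}(\overline{\Omega})}$ while the energy estimate gives the sharper $\|\grad w\|_{L^2(\Omega)}\le C\|\delta\sigma\|_{L^2(\Omega)}$. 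Since $|\grad u|\ge m>0$ on the compact set $\overline{\Omega}$, for $\epsilon$ small the $C^{2,\alpha}$ bound forces $|\grad(u+\tu)|\ge m>0$ on $\overline{\Omega}$, so hypothesis \eqref{nonsingularity} holds. Taking $\Gamma=\Gamma'=\partial\Omega$ gives $\I(\partial\Omega,u+\tu)=\S(\partial\Omega,u+\tu)=\overline{\Omega}$, so Theorem~\ref{theor:global-stability-CDII} applies on all of $\Omega$ and yields $\|\delta\sigma\|_{L^2(\Omega)}\le C\|\grad\cdot(\Pi_{\grad(u+\tu)}(\J(\sigma)-\J(\tsigma)))\|_{L^2(\Omega)}^{\alpha/(2+\alpha)}$.

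The algebraic heart of the comparison is the exact identity
\[
\J(\sigma)-\J(\tsigma)=-\tfrac12\,\delta\sigma\,\grad(u+\tu)-\tfrac12\,(\sigma+\tsigma)\,\grad w,
\]
together with the divergence-free relation $\grad\cdot(\J(\sigma)-\J(\tsigma))=0$, i.e. $\grad\cdot(\delta\sigma\,\grad u)=-\grad\cdot(\tsigma\,\grad w)$. Projecting and using this relation, the two data functionals collapse to the same type of object,
\[
\grad\cdot\big(\Pi_{\grad u}(\J(\sigma)-\J(\tsigma))\big)=\grad\cdot\big(\tsigma\,\Pi^\perp_{\grad u}\grad w\big),
\]
\[
\grad\cdot\big(\Pi_{\grad(u+\tu)}(\J(\sigma)-\J(\tsigma))\big)=\tfrac12\,\grad\cdot\big((\sigma+\tsigma)\,\Pi^\perp_{\grad(u+\tu)}\grad w\big).
\]
Because $\grad(u+\tu)-2\grad u=-\grad w=O(\epsilon)$ and $\tfrac12(\sigma+\tsigma)-\tsigma=\tfrac12\delta\sigma=O(\epsilon)$, the two right-hand sides differ only through an $O(\epsilon)$ change of projection direction and an $O(\epsilon)$ change of scalar weight; the leading terms cancel, and the difference of the two functionals is $O(\epsilon)$ times a first-order expression in $\grad w$.

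The hard part is absorbing this penalty at the correct regularity. A black-box use of the displayed estimate only controls the difference in $L^2$ by $\|w\|_{C^{2,\alpha}}\lesssim\|\delta\sigma\|_{C^{1,\alpha}}$, a norm strictly stronger than the $L^2$ norm of $\delta\sigma$ on the left, so it cannot be absorbed once the lossy exponent $\alpha/(2+\alpha)$ has been taken. I would therefore insert the comparison before the interpolation step, inside the mechanism of Theorem~\ref{theor:global-stability-CDII}: along the integral curves of $\grad u$ the scalar component of $\Pi_{\grad u}(\J(\sigma)-\J(\tsigma))$ satisfies a transport equation whose forcing is the measured datum and whose only discrepancy from the $\grad(u+\tu)$ version is the $O(\epsilon)$ penalty above, which at the linear level is controlled by $\|\grad w\|_{L^2}\le C\|\delta\sigma\|_{L^2}$. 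This produces a linear bound of the form $\|\delta\sigma\|\le C\|\text{datum}\|+C\epsilon\|\delta\sigma\|$ in the working norm, so choosing $\epsilon$ small enough absorbs the penalty into the left-hand side; interpolating the resulting derivative-losing estimate against the a~priori $C^{1,\alpha}$ bound, exactly as in the proof of Theorem~\ref{theor:global-stability-CDII}, restores the exponent $\alpha/(2+\alpha)$ and gives \eqref{eq:satbility-controlled-projection}. The final ``in particular'' statement then follows since $\grad\cdot(\Pi_{\grad u}(\cdot))$ is bounded by the $H^1$ norm of $\Pi_{\grad u}(\cdot)$.
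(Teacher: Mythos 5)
Your proposal follows essentially the same route as the paper's proof: Schauder and energy estimates give $\|u-\tu\|_{C^{2,\alpha}}\le C\epsilon$ and $\|\grad(u-\tu)\|_{L^2}\le C\|\sigma-\tsigma\|_{L^2}$, the full-data case of Theorem \ref{theor:global-stability-CDII} is invoked for the direction $\grad(u+\tu)$, the two projected data are shown to differ by a term quadratic in $(\sigma-\tsigma,\grad(u-\tu))$ bounded by $C\epsilon\|\sigma-\tsigma\|_{L^2}$, and this penalty is absorbed for $\epsilon$ small (the paper's \eqref{eq:proof-2.1}--\eqref{eq:proof-2.6}). Your extra care in performing the comparison at the linear level, before the H\"older exponent $\frac{\alpha}{2+\alpha}$ is applied, addresses the one delicate point that the paper passes over (it writes \eqref{eq:proof-2.1} without the exponent), and is the right way to make the absorption step legitimate.
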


As a consequence of Theorem \ref{theorem:controlled-projection} let us consider the problem of determining a perturbation from a constant conductivity: If the boundary voltage is uniform in the $z$-direction, the equipotential surfaces are planes perpendicular to the $z$-direction and we only need two components of the magnetic field $\B = (B^x, B^y, B^z)$ to stably recover the conductivity, as exemplified below.

\begin{example}For simplicity assume that $\sigma \equiv1$ and $f(x,y,z) = z$. Then $u(x,y,z) = z$ and, by Amp\`ere's law,
\[
\delta J^z= \grad u\cdot \delta \J = \frac{1}{\mu_0} \grad u \cdot (\grad \times \delta  \B) = \frac{1}{\mu_0} \left( \frac{\partial} {\partial x}\delta B^y -
\frac{\partial }{\partial y}\delta B^x \right).
\]
By Theorem \ref{theorem:controlled-projection}, we can stably recover a sufficiently small pertubation of the conductivity from $\delta B^x$ and $\delta B^y$.
\end{example}

%In particular, Theorem \ref{theorem:controlled-projection} gives some insight about necessary geometric conditions for the background equipotential surfaces  needed to recover $\sigma$ from only two components of $\B$ in an stable way.

\section{Preliminaries}\label{preliminaries}
The stability result relies on the following decomposition of the perturbation $\delta\J$ in the data.

\begin{proposition}\label{prop:decomp-non-linear-problem}
Let $u$ and $\tilde{u}$ be $\sigma$-harmonic and $\tsigma$-harmonic, respectively. If $\grad(u + \tu) \neq 0$ in $\overline{V}$, for some open subset $V$ of $\Omega$, then
\begin{equation} \label{eq:decomp-non-linear-problem}
2 \grad \cdot \Pi_{\grad(u + \tu)} (\J(\sigma) - \J(\tsigma))
	=  L(u - \tu) \quad \mbox{in }V.
\end{equation}
where $L$ is the differential operator defined by
\begin{equation} \label{eq:L-operator}
Lv := -\nabla\cdot (\sigma+\tsigma) \nabla v + \nabla\cdot\left( (\sigma + \tsigma) \frac{\grad(u+\tu) \cdot\nabla v}{|\grad(u + \tu)|^2} \grad(u + \tu)\right).
\end{equation}
\end{proposition}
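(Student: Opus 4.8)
The plan is to reduce \eqref{eq:decomp-non-linear-problem} to elementary algebra by a polarization (sum-and-difference) rewriting of $\delta\J$, invoking the two Dirichlet problems at exactly one point. Throughout set $s:=u+\tu$ and $d:=u-\tu$, so that $\grad s=\grad(u+\tu)$ is the projection direction; by hypothesis $\grad s\neq 0$ on $\overline V$, so $\Pi_{\grad s}$ is well defined there. First I would rewrite the current perturbation symmetrically: expanding $\delta\J=-\sigma\grad u+\tsigma\grad\tu$ and collecting the four resulting terms in the variables $\sigma+\tsigma$, $\delta\sigma=\sigma-\tsigma$, $\grad s$ and $\grad d$ gives the polarization identity
\[
\delta\J=-\tfrac12\big((\sigma+\tsigma)\grad d+\delta\sigma\,\grad s\big),
\]
in which the mixed products $\sigma\grad\tu$ and $\tsigma\grad u$ cancel pairwise. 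This step is purely algebraic.

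The only analytic input enters next. Since $u$ is $\sigma$-harmonic and $\tu$ is $\tsigma$-harmonic, each of $\J(\sigma)=-\sigma\grad u$ and $\J(\tsigma)=-\tsigma\grad\tu$ is divergence free, hence so is $\delta\J$; combined with the polarization identity this reads
\[
\grad\cdot\big((\sigma+\tsigma)\grad d+\delta\sigma\,\grad s\big)=0,\qquad\text{equivalently}\qquad \grad\cdot(\delta\sigma\,\grad s)=-\grad\cdot\big((\sigma+\tsigma)\grad d\big).
\]
This is the crucial trade: the term carrying the unknown $\delta\sigma$ is exchanged for one depending only on $d=u-\tu$, which is what turns the right-hand side of \eqref{eq:decomp-non-linear-problem} into a bona fide differential operator in $u-\tu$.

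It then remains to project, differentiate, and regroup. Using $\Pi_{\grad s}\v w=\frac{\grad s\cdot\v w}{|\grad s|^2}\grad s$ together with $\Pi_{\grad s}\grad s=\grad s$, the polarization identity gives
\[
\Pi_{\grad s}\delta\J=-\tfrac12\Big((\sigma+\tsigma)\,\Pi_{\grad s}\grad d+\delta\sigma\,\grad s\Big).
\]
Applying $2\,\grad\cdot$, substituting the divergence-free identity to eliminate $\grad\cdot(\delta\sigma\,\grad s)$ in favour of $\grad\cdot((\sigma+\tsigma)\grad d)$, and regrouping via $\Pi^\perp_{\grad s}=\Id-\Pi_{\grad s}$, the two surviving terms assemble into $-\grad\cdot\big((\sigma+\tsigma)\Pi^\perp_{\grad s}\grad d\big)$. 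Comparing with \eqref{eq:L-operator}, which can be written compactly as $Lv=-\grad\cdot\big((\sigma+\tsigma)\Pi^\perp_{\grad s}\grad v\big)$, identifies this with $L(u-\tu)$ and proves \eqref{eq:decomp-non-linear-problem}.

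The main obstacle is conceptual rather than technical: it is recognizing that $\grad\cdot(\delta\sigma\,\grad s)$ — which is \emph{not} a differential operator applied to $u-\tu$ — can be removed using only the divergence-free structure of $\delta\J$. Everything else (the polarization bookkeeping and the $\Pi_{\grad s}+\Pi^\perp_{\grad s}$ algebra) is routine, and the single compact form $Lv=-\grad\cdot((\sigma+\tsigma)\Pi^\perp_{\grad s}\grad v)$ makes the final identification transparent.
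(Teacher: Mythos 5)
Your argument is correct and follows essentially the same route as the paper: the same polarization of $\J(\sigma)-\J(\tsigma)$ into a $(\sigma+\tsigma)\nabla(u-\tu)$ part and a $(\sigma-\tsigma)\nabla(u+\tu)$ part, the same use of the two conductivity equations in the form $\nabla\cdot\big((\sigma+\tsigma)\nabla(u-\tu)\big)=-\nabla\cdot\big((\sigma-\tsigma)\nabla(u+\tu)\big)$, and the same regrouping via $\Pi^{\perp}_{\nabla(u+\tu)}=\Id-\Pi_{\nabla(u+\tu)}$; the paper merely solves pointwise for $\sigma-\tsigma$ and substitutes, where you instead trade the two divergences directly. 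The only discrepancy is an overall sign: with the convention $\J(\sigma)=-\sigma\nabla u$ your regrouping actually produces $+\nabla\cdot\big((\sigma+\tsigma)\Pi^{\perp}_{\nabla(u+\tu)}\nabla(u-\tu)\big)=-L(u-\tu)$ rather than the $-\nabla\cdot(\cdots)$ you wrote, but the paper's own identity \eqref{Diff-of-Current-Density} carries the mirror-image sign slip, and the sign is immaterial in every subsequent use of the proposition.
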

\begin{proof}
Consider the difference of two internal measurements
\begin{equation} \label{Diff-of-Current-Density}
2(\J(\sigma)  - \J(\tsigma))
	= (\sigma - \tsigma) \nabla (u + \tu) + (\sigma + \tsigma) \nabla (u - \tu).
\end{equation}
If we dot product with $\grad (u + \tu)$ we obtain
\begin{equation} \label{Projection-Diff-of-Current-Density}
2(\J(\sigma)  - \J(\tsigma))\cdot \grad (u +\tu)
	= (\sigma - \tsigma) |\grad(u + \tu)|^2 + (\sigma + \tsigma) \nabla (u - \tu)\cdot \nabla (u + \tu).
\end{equation}
Solving for $\sigma - \tsigma$ we have
\begin{equation}\label{Diff-of-Sigmas}
\sigma - \tsigma
	= \frac{2(\J(\sigma)  - \J(\tsigma))\cdot \grad (u +\tu) }{|\grad(u + \tu)|^2} - (\sigma + \tsigma) \frac{\grad (u + \tu) \cdot \grad (u - \tu)}{|\grad(u + \tu)|^2}
\end{equation}
On the other hand it follows easily from \eqref{Dirchlet-Problem}
\begin{equation} \label{Dirch-Prob-Sum}
\nabla \cdot (\sigma + \tsigma) \nabla(u - \tu)
	= - \nabla \cdot (\sigma - \tsigma) \nabla (u +\tu) \quad \mbox{in } \Omega.
\end{equation}
Using \eqref{Diff-of-Sigmas} we substitute $\sigma - \tsigma$ in \eqref{Dirch-Prob-Sum}  and we get
\begin{equation}\nonumber
\begin{split}
L (u-\tu)
		& =   - \grad \cdot (\sigma + \tsigma) \grad (u - \tu) \\
 &\quad+ \grad \cdot (\sigma + \tsigma) \frac{\grad (u + \tu) \cdot \grad (u - \tu)}{|\grad(u + \tu)|^2}	\grad (u+\tu) \\
 		& = \grad \cdot \left( \frac{2(\J(\sigma)  - \J(\tsigma))\cdot \grad (u +\tu) }{|\grad(u + \tu)|^2} \nabla (u + \tu) \right) \\
 		& = 2 \grad \cdot \Pi_{\grad(u + \tu)} (\J(\sigma) - \J(\tsigma))
\end{split}
\end{equation}
\end{proof}

We remark that \eqref{eq:decomp-non-linear-problem} is the non-linear version of the decomposition obtained in the linearized case in \cite{montalto2013stability}.
The representation of the operator \nolinebreak $L$ in local coordinates can then be obtained similarly as in \cite{montalto2013stability}:

\begin{proposition}\label{Prop:local_coordinates_L}
Let $u, \tu \in C^2(\overline{\Omega})$, with $\nabla (u+\tu)(x_0) \neq 0$ for $x_0 \in \Omega$. Then locally near $\Sigma_{x_0}$, the operator $L$ in \eqref{eq:L-operator} is the restriction of $\grad(\sigma +\tsigma)\grad$ onto the level surfaces $(u + \tu) =$const. Moreover, if $y^n = u+\tu$, and $y' = (y^1, \ldots, y^{n-1})$ are any local coordinates of the level set $\Sigma_{x_0}$, then in a neighborhood of $\Sigma_{x_0}$ the Euclidean line element $\d s$ is given by
\begin{equation} \label{LocalCoordLevelCurve}
 \d s^2  = c^2(\d y^n)^2 +  g_{\alpha\beta}\d y^\alpha \d y^\beta,
\quad g_{\alpha, \beta} : = \sum_i^{n-1} \frac{\partial x^i}{\partial y^\alpha}  \frac{\partial x^i}{\partial y^\beta} ,
\end{equation}
where $c= |\nabla (u+ \tu)|^{-1}$. In this coordinates the operator $L$ becomes
\begin{equation} \label{local_representation_L_not_explicit}
L = - \sum_{\alpha,\beta}\frac{1}{\sqrt{\det g}} \frac{\partial}{\partial y^\beta} (\sigma + \tsigma)  g^{\alpha\beta}  \sqrt{\det g} \frac{\partial}{\partial y^\alpha},
\end{equation}
where the Greek super/subscripts run from 1 to $n-1$.
\end{proposition}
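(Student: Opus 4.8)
My plan is to first recast $L$ in a form that exposes its geometric content. Writing $a:=\sigma+\tsigma$ and observing that $\frac{\grad(u+\tu)\cdot\grad v}{|\grad(u+\tu)|^2}\grad(u+\tu)=\Pi_{\grad(u+\tu)}\grad v$, the two terms of \eqref{eq:L-operator} combine as
\[
Lv = -\grad\cdot\!\big(a\,\grad v\big) + \grad\cdot\!\Big(a\,\Pi_{\grad(u+\tu)}\grad v\Big) = -\grad\cdot\!\big(a\,\Pi^\perp_{\grad(u+\tu)}\grad v\big).
\]
Since $\Pi^\perp_{\grad(u+\tu)}\grad v$ is the component of $\grad v$ tangent to the level sets $(u+\tu)=\text{const}$, the operator $L$ differentiates $v$ only along those level sets. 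This already gives the first assertion, that $L$ is the restriction of $\grad\cdot a\grad$ onto the surfaces $(u+\tu)=\text{const}$.

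Next I would introduce coordinates adapted to the foliation by level sets. Since $\grad(u+\tu)(x_0)\neq 0$, near $\Sigma_{x_0}$ I set $y^n=u+\tu$ and take any local coordinates $y'=(y^1,\dots,y^{n-1})$ on $\Sigma_{x_0}$, transported off $\Sigma_{x_0}$ along the integral curves of $\grad(u+\tu)$; a flow-box/implicit-function argument makes this a valid $C^1$ change of variables wherever $|\grad(u+\tu)|>0$. By construction each $\partial_{y^\alpha}$, $\alpha<n$, stays tangent to a level set, hence is orthogonal to $\grad(u+\tu)$, while $\partial_{y^n}$ points along $\grad(u+\tu)$. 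This kills the cross terms $g_{\alpha n}=\langle\partial_{y^\alpha},\partial_{y^n}\rangle$, giving the block form \eqref{LocalCoordLevelCurve}, with $g_{\alpha\beta}$ the induced Euclidean metric on the leaf. The normal coefficient follows from $\partial_{y^n}(u+\tu)=1$ together with $\partial_{y^n}\parallel\grad(u+\tu)$, which force $\partial_{y^n}=\grad(u+\tu)/|\grad(u+\tu)|^2$; hence $g_{nn}=|\partial_{y^n}|^2=|\grad(u+\tu)|^{-2}=c^2$.

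Finally I would transport $\grad\cdot(a\grad)$ into these coordinates using the standard curvilinear expression, with total volume element $\sqrt{\det G}=c\sqrt{\det g}$ coming from the block structure and inverse metric $G^{-1}=\mathrm{diag}(g^{\alpha\beta},c^{-2})$. Splitting the resulting sum into its tangential ($\alpha,\beta<n$) and normal ($a=n$) parts, the normal part collects exactly the terms carrying $\partial_{y^n}v$; discarding it—equivalently, applying the tangential projection $\Pi^\perp_{\grad(u+\tu)}$ as above—leaves
\[
L=-\sum_{\alpha,\beta}\frac{1}{\sqrt{\det g}}\frac{\partial}{\partial y^\beta}\Big(a\,g^{\alpha\beta}\sqrt{\det g}\,\frac{\partial}{\partial y^\alpha}\Big),
\]
the weighted Laplace–Beltrami operator on the level surfaces, as in \eqref{local_representation_L_not_explicit}.

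The step I expect to demand the most care is this last reduction: extracting the tangential part of the ambient operator while correctly handling the conformal weight $c=|\grad(u+\tu)|^{-1}$ in the volume factor. One must verify that the $\partial_{y^n}v$-terms genuinely decouple and track how the factor $c$ in $\sqrt{\det G}=c\sqrt{\det g}$ interacts with the tangential derivatives—this is precisely where \eqref{local_representation_L_not_explicit} records the second-order (principal) structure of $L$ on each leaf, and where organizing the bookkeeping along the lines of the linearized computation in \cite{montalto2013stability} is the natural way to keep the terms under control.
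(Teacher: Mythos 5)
Your proposal is correct and follows essentially the same route as the paper: the key identity $Lv=-\grad\cdot\bigl((\sigma+\tsigma)\,\Pi^\perp_{\grad(u+\tu)}\grad v\bigr)$, the coordinates adapted to the foliation $y^n=u+\tu$ with the block metric $c^2(\d y^n)^2+g_{\alpha\beta}\d y^\alpha\d y^\beta$, and the identification of $L$ as the weighted tangential elliptic operator. The only differences are presentational — you build the coordinates by flowing along $\grad(u+\tu)$ and apply the curvilinear divergence formula directly, whereas the paper invokes the eikonal/signed-distance structure for the block metric and extracts the coordinate expression from the quadratic form against test functions — and both versions share the same harmless looseness about whether $\det g$ carries the factor $c^2$ from the normal block, which affects only lower-order terms.
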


\begin{proof}
Let $\Sigma_{x_0}$ be the level set of $u+\tu$ passing through $x_0$. On $\Sigma_{x_0}$ choose local coordinates $y'=(y^1,\ldots , y^{n-1})$ and set $y^n= (u+\tu) (x)$. Then $y=(y',y^n) = \varphi(x)$ define local coordinates in a neighborgood of $\Sigma_{x_0}$. In  the new coordinates, the Euclidean metric becomes
\begin{equation}\label{eq:metric-change-coordinates}
\d s^2 = c^2 (\d y^n)^2 + g_{\alpha\beta}\d y^\alpha \d y^\beta,\;\mbox{where}\; g_{\alpha\beta} : = \sum_{i=1}^{n-1} \frac{\partial x^i}{\partial y^\alpha}  \frac{\partial x^i}{\partial y^\beta},
\end{equation}
where $c= |\nabla (u+\tu)|^{-1}$. This follows easily by noticing that $u + \tu$ trivially solves the eikonal equation $c^2|\nabla \phi|^2=1$ for the speed $c= |\nabla (u+\tu)|^{-1}$. Then, near $x_0$, $u+\tu$ is the signed distance from the $x$ to the level surface $\Sigma_{x_0}$ and the Euclidean metric has block structure given by \eqref{eq:metric-change-coordinates}. Denote by $g$ be  metric after the change of variables, i.e.,  
\[
g = \left[ \begin{array}{cc}
 [g_{\alpha,\beta}] & 0 \\ 
0 & c^2
\end{array}  \right]
\]

Let $\phi\in C_0^\infty(\Omega)$ be a test function supported near some $\hat{x}_0 \in \Sigma_{x_0}$. We have
\begin{equation} \label{eq:L_form}
\begin{split}
\langle Lv,\phi\rangle
	& = \langle(\sigma + \tsigma)\nabla v,\nabla \phi\rangle - \langle(\sigma + \tsigma)\Pi_{\grad(u+\tu)}\nabla v,\nabla \phi \rangle,\\
	& = \left\langle (\sigma + \tsigma)\Pi^\perp_{\grad(u+\tu)}\nabla  v,\Pi^\perp_{\grad(u+\tu)} \nabla  \phi \right\rangle,
\end{split}
\end{equation}where $\langle\cdot,\cdot\rangle$ denotes the scalar product in $L^2(\Omega)$.

From \eqref{eq:metric-change-coordinates}, we get that for any function $v \in C^1(\overline{\Omega})$ 
\[
\Pi_{\grad(u+\tu)}\nabla_xv =  \left(0, \ldots ,0 , c^2 \frac{\partial v}{\partial y^n} \right).
\]
and 
\[
\Pi^{\perp}_{\grad(u+\tu)}\nabla_xv = \left(\frac{\partial v}{\partial y^1}, \ldots ,\frac{\partial v}{\partial y^{n-1}}, 0 \right).
\]
thus in $(y', y^n)$ coordinates \eqref{eq:L_form} becomes
\begin{equation}
\label{local_representation_L_form}
\begin{split}
\langle L  v,  \phi \rangle
	& =  \sum_{\alpha, \beta} \int_{\varphi^{-1}(\Omega)}(\sigma +\tsigma) \left(  g^{\alpha\beta}   \frac{\partial v}{\partial y^\alpha}    \frac{\partial \phi}{\partial y^\beta}  \right)\sqrt{\det g}\,\d y\\
& =- \sum_{\alpha, \beta} \int_{\varphi^{-1}(\Omega)}\frac{1}{\sqrt{\det g}} \left( \frac{\partial}{\partial y^\beta} (\sigma + \tsigma)  g^{\alpha\beta}  \sqrt{\det g} \frac{\partial}{\partial y^\alpha} \right)\phi(y)dy
\end{split}
\end{equation}
Since $\phi$ and $\hat{x}_0 \in \Sigma_{x_0}$ were arbitrarily  we conclude that
\begin{equation}\label{local_representation_L}
L = -\sum_{\alpha, \beta} \frac{1}{\sqrt{\det g}} \left( \frac{\partial}{\partial y^\beta} (\sigma + \tsigma)  g^{\alpha\beta}  \sqrt{\det g} \frac{\partial}{\partial y^\alpha} \right).
\end{equation}
near $\Sigma_0$.
\end{proof}

We note that  Proposition \ref{Prop:local_coordinates_L} shows  $L$  to be an elliptic differential operator acting onto the level sets of $u+\tu$, for which the normal component becomes a parameter.

\section{Stability estimates for partial data}\label{sec:proofofMainResult}

In this section we prove Theorem  \ref{theor:global-stability-CDII} and its corollary. The proof is based on establishing separate  estimates for $u-\tu$ and for the operator $L$ in \eqref{eq:decomp-non-linear-problem}. For points in $\S(\Gamma',u+\tu)$ the visibility from $\Gamma'$ parallel to $\grad(u + \tu)$ will allow us to control $u-\tu$, while the the visibility along the equipotential sets will be used for estimates on $L$.

Recall that $\S(\Gamma',u+\tu) = \S^+(\Gamma',u+\tu) \cup \S^-(\Gamma',u+\tu)$, where
\[
{\mathcal S^\pm}(\Gamma', u) = \{p \in \overline{\Omega}: \Sigma_p \cap \partial \Omega \subset \Gamma'  \mbox{ and } l^\pm_p \subset \I(\Gamma,u) \}.
\]We will show that
\begin{equation}\label{eq:divided-estimate}
\| \sigma - \tsigma \|_{L^2(\S^\pm(\Gamma', u+\tu))} \leq C \left|\left| \grad \cdot(\Pi_{\grad (u + \tu)} (\J(\sigma) - \J(\tsigma) ) ) \right|\right|^{\frac{\alpha}{2+\alpha}}_{L^2(\I(\Gamma', u+\tu))}.
\end{equation}
We prove \eqref{eq:divided-estimate} for $\S^+(\Gamma',u+\tu)$. The corresponding inequality for $\S^-(\Gamma',u+\tu)$ follows similarly.

%from the following observation. Notice that if $(u, \tu)$ satisfy the hypothesis of the theorem so does $(-u, -\tu)$. Since $\I(\Gamma', -(u+\tu)) = \I(\Gamma', u+\tu)$ and  $\S^+(\Gamma', -(u+\tu)) = \S^-(\Gamma', u+\tu)$  then if estimate \eqref{eq:divided-estimate} holds for $\S^+(\Gamma', u+\tu)$ it will also hold for  $\S^-(\Gamma', u+\tu)$.

To simplify notation,  let
\begin{equation}\label{notations}
\S  := \S^+(\Gamma,u+\tu),\quad \S': = \S^+(\Gamma ', u+\tu), \mbox{and }\; \I' := \I(\Gamma', u+\tu).
\end{equation}
We assume w.l.o.g. that $\S'$ and $\S$ are connected. Notice that, since there are only finitely many components, we can add the estimates that we obtain for the single-component case, to the more general case.

We will make use of the following two technical estimates in the Lemmas below.

\begin{lemma} \label{basicEstimates}There exist $C>0$ dependent on the domain $\Omega$ such that
\begin{equation} \label{StabilityEstimateForPotential}
\|\sigma - \tilde{\sigma}\|_{L^2(\S')} \leq C  \| u - \tu \|_{H^2(\S')},
\end{equation}
where for simplicity we denote $\S' := \S^+(\Gamma',u+\tu)$.
\end{lemma}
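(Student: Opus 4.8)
The plan is to turn the claim into a transport estimate for $\delta\sigma=\sigma-\tsigma$ along the integral curves of $\grad(u+\tu)$. The starting point is the identity \eqref{Dirch-Prob-Sum}; expanding the divergence on its right-hand side (legitimate pointwise since $\sigma-\tsigma\in C^{1,\alpha}(\overline\Omega)$) isolates the first-order operator acting on $\delta\sigma$ and yields the transport equation
\[
\grad(u+\tu)\cdot\grad(\sigma-\tsigma)+\Delta(u+\tu)\,(\sigma-\tsigma)=-\,\grad\cdot\big((\sigma+\tsigma)\grad(u-\tu)\big)=:f,
\]
valid in $\I'$. The right-hand side $f$ involves only first and second derivatives of $u-\tu$, with coefficients controlled by the $C^1$-norm of $\sigma+\tsigma$, so $\|f\|_{L^2(\S')}\le C\|u-\tu\|_{H^2(\S')}$; this is exactly where the dependence of $C$ on $\|\sigma\|_{C^1}$ and $\|\tsigma\|_{C^1}$ enters.

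Next I would integrate this linear ODE along the characteristics, which are the curves $\gamma_p$ of $\grad(u+\tu)$ used to define $\S^+$. Two geometric facts drive the estimate. First, for $p\in\S'$ the whole forward trajectory stays in $\S'$: if $q\in l^+_p$ then $l^+_q\subset l^+_p\subset\I'$ and $q\in\I'$, hence $q\in\S^+$; this keeps the final norm on $\S'$ rather than on a larger set. Second, the exit point $\gamma_p(t_p^+)\in\partial\Omega\cap\I'\subset\Gamma'$, where $\sigma=\tsigma$ by \eqref{eq:bound-cond-sigma-and-u} (recall $\Gamma'\subset\subset\Gamma$), so $\delta\sigma$ vanishes there. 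Condition \eqref{nonsingularity} makes $|\grad(u+\tu)|$ bounded below on $\overline{\I'}$; hence the exit time $t_p^+$ is uniformly bounded (because $u+\tu$ increases at rate $|\grad(u+\tu)|^2$ along $\gamma_p$ and is bounded on $\overline\Omega$) and the integrating factor $\exp\!\big(\pm\!\int\Delta(u+\tu)\big)$ is uniformly bounded using $u,\tu\in C^{2,\alpha}$. Integrating from the exit point gives the pointwise bound $|(\sigma-\tsigma)(p)|\le C\int_0^{t_p^+}|f(\gamma_p(t))|\,dt$.

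Finally I would square, apply Cauchy--Schwarz using the uniform bound on $t_p^+$, and integrate over $\S'$. Introducing flow-box coordinates adapted to $\grad(u+\tu)$ (curve label on a transversal, together with the flow parameter), whose Jacobian is bounded above and below on the compact set $\overline{\I'}$ by \eqref{nonsingularity} and the $C^2$ regularity, the iterated integral $\int_{\S'}\!\int_0^{t_p^+}|f(\gamma_p(t))|^2\,dt\,dp$ becomes, after a Fubini/change-of-variables step and using that the trajectories remain in $\S'$, at most $C\int_{\S'}|f|^2$. Combined with the bound on $\|f\|_{L^2(\S')}$, this gives \eqref{StabilityEstimateForPotential}.

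I expect the change-of-variables/Fubini step to be the main obstacle: one must foliate $\S'$ by the integral curves of $\grad(u+\tu)$, verify that this is a genuine $C^1$ coordinate system with uniformly controlled Jacobian (precisely where the lower bound on $|\grad(u+\tu)|$ and the $C^{2,\alpha}$ regularity are used), and interchange the order of integration so that the final domain is $\S'$ rather than the larger region $\I'$. The containment $l^+_p\subset\S'$ for $p\in\S'$ is the key structural point that keeps the right-hand side norm on $\S'$ as stated.
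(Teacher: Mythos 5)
Your proposal is correct and follows essentially the same route as the paper: both rewrite \eqref{Dirch-Prob-Sum} as the transport equation \eqref{ODEforhOmega} for $\delta\sigma$ along $\nabla(u+\tu)$, integrate it with the integrating factor from the exit point on $\Gamma'$ where $\delta\sigma$ vanishes, and then combine Cauchy--Schwarz with a flow-box change of variables (controlled Jacobian and uniformly bounded travel time, both via \eqref{nonsingularity}) and a finite covering of $\overline{\S'}$ by such tubular neighborhoods. Your explicit observation that $l^+_q\subset l^+_p$ keeps the trajectories, and hence the right-hand-side norm, inside $\S'$ is exactly the structural point the paper uses implicitly.
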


\begin{proof}Since $\nabla(u+\tu)\neq 0$ in $\overline\Omega$, without loss of generality (otherwise work with a diffeormorphic image of $\Omega$, and, possibly, with $-(u+\tu)$), we may assume that
\begin{align}\label{assumption}
\min_{\overline\Omega}\frac{\partial (u+\tu)}{\partial{x^n}}> 0.
\end{align}
Since the $n$-th coordinate now plays a special role, to simplify notation, let
$x':=(x^1,\ldots ,x^{n-1})$, so that $x=(x',x^n)$.

Let $Z$ be the set (possibly empty) of boundary points , where $\nabla(u+\tu)$ is tangential to the boundary,
\begin{align}\label{exceptions}
Z:=\{x\in\partial\Omega:\; n(x)\cdot\nabla (u+\tu)(x)=0\}.
\end{align}
The regularity assumptions on the boundary of the domain, and on $u,\tu$ yield that $Z$ is closed and confined to a co-dimension 1 variety of the boundary, in particular, $Z$ is negligible with respect to the induced area measure on the boundary. 

Let $\Gamma'\subset\subset\Gamma\setminus Z$, $x_0 \in {\S'}={ \S^+(\Gamma ', u+\tu)}$, and $\Sigma_0:= \{x\in\Omega:\; (u+\tu)(x)= (u+\tu)(x_0)\}$ be the level set passing through $x_0$.

Since $\frac{\partial(u+\tu)}{\partial x^n}(x_0',x^n_0)\neq0$, by the implicit function theorem, in a neighborhood
$O\subset\R^n$ of $x_0'$ there is a local representation of $\Sigma_0$:
$O\ni x'\mapsto (x',g(x'))$, for some $C^{1,\alpha}$- smooth coordinate map $g:O\subset \R^{n-1}\to\R$ with $g(x_0')=x^n_0$.

We define next a tubular neighborhood  $T_{x_0}$ by flowing along $\nabla(u+\tu)$ the points on $O\subset\Sigma_0$ until the boundary is met. More precisely, we consider the family (indexed in $x'\in O$) of the initial value problems
\begin{align}\label{IVP}
\frac{d}{dt}{\gamma}(t;x')=\frac{\nabla(u+\tu)}{|\nabla(u+\tu)|}(\gamma(t;x')),\quad \gamma(0)=(x',g(x')), \;x'\in O.
\end{align}
Since $x_0\in \S'$, for the solution $\gamma(\cdot;x_0')$ of \eqref{IVP} with parameter $x'=x_0'$ meets the boundary transversally, hence there exists a smallest time $t_{x_0'}$ such that $\gamma(t_{x_0'};x_0')\in \Gamma'$, and $\gamma(t;x_0')\in \Omega$ for $0\leq t<t_{x_0'}$.

The continuous dependence with parameters of solutions of initial value problems for ODE's shows that, by possibly shrinking the neighborhood $O$, we have that $t\mapsto\gamma(t,x')$ are defined on a maximal interval with $\gamma(\cdot,x'):[0,t_{x'})\to\Omega$ and $\gamma(t_{x'};x')\in\Gamma'$.

We consider the following change of coordinates $(x',x^n)\to (y',y^n)$, where $y'=x'$ and $x^n=\gamma(y^n;y')$. In the new coordinates, the tubular neighborhood $T_{x_0}$ rewrites 
\[
T_{x_0} = \{(y',y^n) \in \R^n: y'\in O,\, 0\leq y^n \leq t_{y'} \}.
\]
Let $J(y',y^n)$ denotes the absolute value of the Jacobian determinant of the change of coordinates above at some $(y',y^n)\in T_{x_0}$, and let $\beta>0$  denote an upper bound on the quotient

\begin{align}\label{jacobQuotient}
\beta:= \frac{\max_{T_{x_0}} J}{\min_{T_{x_0}}J} < \infty
\end{align}

Recall the equation \eqref{Dirchlet-Problem} interpreted as a transport equation for $\delta\sigma = (\sigma - \tsigma)$:
\begin{equation} \label{ODEforhOmega}
\nabla(u +\tu)\cdot \nabla (\delta\sigma) + (\delta\sigma) \Delta (u + \tu)  = G \quad \mbox{in } \Omega,
\end{equation}where, for brevity, we let
\begin{equation}
G:= - \nabla \cdot (\sigma + \tsigma) \nabla(u - \tu).
\end{equation}

%Since in the extension of $h$ to $\Omega_1$ the differential commutes with the extension $\eqref{ExtensionCommutesWithDifferential}$,  we can extend \eqref{ODEforhOmega} to get
%\begin{equation} \label{ODEforh}
%\nabla(u +\tu)\cdot \nabla h + \Delta (u + \tu) h= - (\nabla \cdot (\sigma + \tsigma) \nabla(u - \tu))H \quad \mbox{in } \Omega_1,
%\end{equation}
%were recall that $H$ is the characteristic function in $\Omega$.
%%Notice that we are not using $\Delta(u_1 + \tu_1)$ since this expresion might not make any sense, but we can think of $\Delta (u + \tu)H$ were we extendend $\Delta(u + \tu)$ in a continuous manner across the boundary, i.e. we can still make sense of $\Delta (u + \tu)H$ as distribution.

In the local coordinates in the interior of $T_{x_0}$, the equation \eqref{ODEforhOmega} becomes
\begin{equation} \label{ODEforhLocal}
\frac{1}{c^2}\frac{\partial (\delta\sigma)}{\partial y^n} + (\delta\sigma)\Delta(u + \tu) = G,
\end{equation}
where $c = |\nabla (u+\tu)|^{-1}$. By \eqref{nonsingularity},
\begin{equation}\label{max_c}
\max_{\overline \Omega}c<\infty.
\end{equation}

By hypothesis \eqref{samesigmaOnGamma} we also have that
\begin{align}\label{h=0atBdd}
(\delta \sigma)(y', t_{y'}) =0,\quad\forall y'\in O.\end{align}

Using an integral factor we can solve \eqref{ODEforhLocal} and \eqref{h=0atBdd} to get
\begin{equation}
(\delta \sigma)(y',y^n) = \left( \frac{1}{\mu(y', y^n)} \int_{t_{y'}}^{y^n} c^2 G(y',t) \mu(y',t)dt \right),
\end{equation}where
\[
\mu(y',y^n) = \exp\left(-\int^{t_{y'}}_{y^n}c^2 \Delta(u+\tu)(y',t) dt \right).
\]

Let $m,M$ (not necessarily positive) be such that
\begin{equation}
m\leq c^2\Delta (u+\tu)(x)\leq M, \quad\forall x\in\overline\Omega.
\end{equation}Then, it is easy to see that, for $y'\in O$ and  $0\leq t\leq y^n\leq t_{y'}$, we have

\begin{equation}\label{boundMU}
e^{-|M|diam(\Omega)}\leq\frac{\mu(y',t)}{\mu(y',y^n)}\leq e^{|m|diam(\Omega)}.
\end{equation}

%Notice that since $\nabla(u_2 + \tu_2) \neq 0$ in $\overline{\S}$ then $0 < C_1 < \mu < C_2$ for some positive constants $C_1$ and $C_2$. Then using the Cauchy inequality we get that for $\delta_0 \geq \delta >0$,

For a constant $\tilde C>0$ depending on the bounds in \eqref{boundMU} and \eqref{max_c}, and $\beta$ in \eqref{jacobQuotient}, we estimate
\begin{equation} \label{ineq:transport_lemma_estimate}
\begin{split}
\| \delta\sigma\|^2_{L^2(T_{x_0})} & = \int_{y'\in O}\int_0^{t_{y'}}  \left| \int_0^{y^n} c^2 G(y',t)\frac{\mu(y',t)}{\mu(y', y^n)} dt\right|^2  J(y',y^n) \d y^n\d y'\\
&\leq \tilde{C} \int_{y'\in O}\int_0^{t_{y'}} \int_0^{y^n} | G(y',t)|^2 J(y',y^n) \d t  \d y^n \d y'\\
&= \tilde{C} \int_{y'\in O} \int_0^{t_{y'}}\int_t^{t_y'} | G(y',t)|^2 \frac{J(y',y^n)}{J(y',t)}dy^n J(y',t) \d t  \d y'\\
&\leq \tilde{C}\beta \int_{y'\in O} \int_0^{t_{y'}}(t_y'-t)| G(y',t)|^2 dy^n J(y',t) \d t  \d y'\\
&\leq \tilde{C}\beta diam(\Omega)\int_{y'\in O} \int_0^{t_{y'}}| G(y',t)|^2 dy^n J(y',t) \d t  \d y'\\
& = \tilde{C}\beta diam(\Omega)\|G\|_{L^2(T_{x_0})}\\& 
\leq  \tilde{C}\beta diam(\Omega) \| \nabla \cdot (\sigma + \tsigma) \nabla (u - \tu) \|_{L^2(\S')},
\end{split}
\end{equation}

Since $\overline{\S'}$ can be covered by finitely many tubular neighborhoods, and the norm in the right hand side of \eqref{ineq:transport_lemma_estimate}, we conclude that
\begin{equation} 
\| \delta\sigma\|^2_{L^2(\S')} \leq C  \| \nabla \cdot (\sigma + \tsigma) \nabla (u - \tu) \|_{L^2(\S')} \leq C \| u - \tu \|_{H^2(\S')},
\end{equation}for a constant $C$ depending on $u,\tu$ and the domain $\Omega$, and a priori bounds on $C^1$-norm of $\sigma$ and $\tsigma$.
This finishes the proof of \eqref{StabilityEstimateForPotential}.

\end{proof}

\begin{lemma}\label{basicEstimate2}There exists $C>0$ dependent on $\Omega$ and a lower bound on $\sigma+\tsigma$, such that,
\begin{equation} \label{StabilityEstimateForL}
\|u - \tilde{u}\|_{L^2(\I')} \leq C \| L(u - \tilde{u})\|_{L^2(\I')},
\end{equation}where, for simplicity we denote by  $\I':=\I(\Gamma',u+\tu)$.
\end{lemma}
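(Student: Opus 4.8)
The plan is to exploit the structure established in Proposition \ref{Prop:local_coordinates_L}: the operator $L$ acts tangentially to the level sets of $u+\tu$, with the normal variable $y^n = u+\tu$ entering only as a parameter, and on each such level set it is a uniformly elliptic, self-adjoint operator in divergence form with weight $\sigma+\tsigma$. The crucial geometric observation is that $\I' = \I(\Gamma', u+\tu)$ is a union of entire level sets: if $p\in\I'$ then every $q$ on $\Sigma_p$ satisfies $\Sigma_q\cap\partial\Omega = \Sigma_p\cap\partial\Omega\subset\Gamma'$, so $q\in\I'$. Thus $\I'$ is foliated by the surfaces $\Sigma_t := \{u+\tu = t\}$, each of which meets $\partial\Omega$ only inside $\Gamma'\subset\subset\Gamma$. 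Since the boundary condition \eqref{eq:bound-cond-sigma-and-u} gives $u-\tu = 0$ on $\Gamma$, the function $v:=u-\tu$ vanishes on $\Sigma_t\cap\partial\Omega$ for each leaf; that is, $v$ restricted to each level surface satisfies a homogeneous Dirichlet condition on its relative boundary. The estimate \eqref{StabilityEstimateForL} will then follow by proving the corresponding estimate leaf-by-leaf and integrating in $t$.

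First I would establish the per-leaf estimate $\|v\|_{L^2(\Sigma_t)}\le C\|Lv\|_{L^2(\Sigma_t)}$, with $C$ independent of $t$. Using the quadratic form \eqref{eq:L_form} restricted to $\Sigma_t$, one has $\langle Lv, v\rangle_{\Sigma_t} = \int_{\Sigma_t}(\sigma+\tsigma)\,g^{\alpha\beta}\,\partial_{y^\alpha}v\,\partial_{y^\beta}v\,\sqrt{\det g}\,\d y'$, which is bounded below by $(\inf_{\overline\Omega}(\sigma+\tsigma))$ times the unweighted tangential Dirichlet energy of $v$ on $\Sigma_t$. Because each leaf is a graph over a bounded subset of $\R^{n-1}$ and the tangential metric $[g_{\alpha\beta}]$ is uniformly elliptic (by the $C^{2,\alpha}$ regularity of $u,\tu$ and \eqref{nonsingularity}), a Poincar\'e inequality for functions vanishing on $\Sigma_t\cap\partial\Omega$ holds with a constant controlled only by $\operatorname{diam}(\Omega)$ and the ellipticity bounds on $g$. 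Combining, $\|v\|_{L^2(\Sigma_t)}^2\le C\langle Lv,v\rangle_{\Sigma_t}$, and Cauchy--Schwarz then yields $\|v\|_{L^2(\Sigma_t)}\le C\|Lv\|_{L^2(\Sigma_t)}$, with $C$ depending only on $\Omega$, the ellipticity of $g$, and the lower bound on $\sigma+\tsigma$, as required.

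Next I would assemble the global estimate by integrating the leaf estimate against the normal parameter. Since $L$ is tangential, $(Lv)|_{\Sigma_t}$ coincides with the intrinsic operator applied to $v|_{\Sigma_t}$, so the leaf quantities are exactly the restrictions of the global ones. By the coarea formula, $\|v\|_{L^2(\I')}^2 = \int \big(\int_{\Sigma_t\cap\I'} |v|^2 |\grad(u+\tu)|^{-1}\,\d A_t\big)\,\d t$, and likewise for $Lv$; since $c=|\grad(u+\tu)|^{-1}$ is bounded above and below on $\overline{\I'}$ by \eqref{max_c} and \eqref{nonsingularity}, the weight drops out up to a fixed constant. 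Because $\I'$ is a union of full leaves, for each relevant $t$ the set $\Sigma_t\cap\I'$ is an entire leaf with boundary in $\Gamma'$, so the per-leaf estimate applies. Integrating in $t$ and converting back with the coarea formula gives $\|v\|_{L^2(\I')}\le C\|Lv\|_{L^2(\I')}$, which is \eqref{StabilityEstimateForL}.

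The main obstacle is the uniformity of the Poincar\'e constant across all leaves, in particular for the possibly degenerate (thin, or small-diameter) leaves near the edge of $\I'$. I would handle this by invoking the Faber--Krahn/Poincar\'e principle in the form that, for any open set $D$ contained in a Euclidean ball of radius $R$, every $w$ vanishing on $\partial D$ obeys $\|w\|_{L^2(D)}\le C_n R\,\|\nabla w\|_{L^2(D)}$ with $C_n$ depending only on the dimension --- a bound insensitive to the measure or shape of $D$. Transporting this through the uniformly bilipschitz graph coordinates $y'=x'$ and the uniformly elliptic $g$ gives a leaf-independent constant. A secondary point to make precise is the measurability and patching of the leaf coordinates across $\I'$, so that the coarea step is legitimate; this follows from the continuous dependence of the level-set foliation on $t$ under the non-degeneracy hypothesis \eqref{nonsingularity}.
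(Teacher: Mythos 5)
Your overall strategy is the same as the paper's: use the fact (Proposition \ref{Prop:local_coordinates_L}) that $L$ is a tangential, uniformly elliptic, divergence-form operator on the level sets of $u+\tu$, pair $L(u-\tu)$ with $u-\tu$, apply a Poincar\'e inequality on the leaves (legitimate because $u-\tu$ vanishes on $\Sigma_t\cap\partial\Omega\subset\Gamma$), and finish with Cauchy--Schwarz. The paper organizes this through thin tubular neighborhoods $U_\epsilon$ of finitely many level sets and sums over a finite cover rather than slicing by the coarea formula, but that difference is cosmetic.

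There is, however, one genuine gap. Your per-leaf identity $\langle Lv,v\rangle_{\Sigma_t}=\int_{\Sigma_t}(\sigma+\tsigma)g^{\alpha\beta}\partial_{y^\alpha}v\,\partial_{y^\beta}v\sqrt{\det g}\,\d y'$ rests on an integration by parts (Green's identity) on the manifold-with-boundary $\Sigma_t\cap\overline\Omega$, and on the membership $v|_{\Sigma_t}\in H^1_0(\Sigma_t\cap\Omega)$. Both are unproblematic when $\Sigma_t$ meets $\partial\Omega$ transversally, but when a leaf is tangent to $\partial\Omega$ its relative boundary can be degenerate (e.g.\ cuspidal), and then neither the boundary-term cancellation in Green's identity nor the implication ``$v$ vanishes pointwise on the relative boundary $\Rightarrow$ $v\in H^1_0$'' is automatic. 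Your Faber--Krahn remark addresses a different (also real) issue --- uniformity of the Poincar\'e constant across leaves --- but the Faber--Krahn bound is stated for $w\in H^1_0(D)$, so it does not resolve this. The paper handles exactly this degeneracy by extending $\sigma,\tsigma,u,\tu$ to a larger domain $\Omega_1\supset\overline{\I'}$ in such a way that $u_1-\tu_1\equiv 0$ and $\sigma_1-\tsigma_1\equiv 0$ on $\Omega_1\setminus\Omega$ (using \eqref{eq:bound-cond-sigma-and-u}); then the quantity being integrated is compactly supported inside each extended leaf and the integration by parts needs no boundary regularity. You would need to add this extension step (or an equivalent approximation argument) to make the leaf-wise identity valid on every leaf of $\I'$.
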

\begin{proof}

Let $x_0 $ be arbitrarily fixed in the injectivity region $\overline{\I'}$. Assume first that the level set $\Sigma_0$  of $u+\tu$ passing through $x_0$ intersects the boundary in $\Gamma'$, transversally at any point of intersection. Consider the local coordinates $(y',y^n)$ for the equipotential set $\Sigma_0$, introduced in Proposition \ref{Prop:local_coordinates_L}. By continuity of $y^n = u+\tu$ w.r.t. $y'$, for sufficiently small $\epsilon$, the  neighborhood $U_\epsilon$ of $\Sigma_0$ defined by 
\[
U_\epsilon := \{(y',y^n)\in \I': y'\in \Sigma_0, \mbox{and } y^n \in (-\epsilon,\epsilon) \}
\]
is \emph{visible from $\Gamma'$ along the equipotential sets of $u+\tu$}.

Using the local representation of $L$ obtained in (***) and local representation the metric $g$ on $U_\epsilon$ we estimate $L(\delta u)$. We denote by $g'$ the restriction of the metric to $y'$, i.e.,  $g' = [g_{\alpha,\beta}]$ for $1 \leq \alpha, \beta  \leq n-1$. Let $\delta u = u - \tu$. Note that in $U_\epsilon$, $(u - \tu)|_{\partial \Omega} =0$, hence Poincar\'e's inequality is valid in $\Sigma_0$. We estimate

\begin{align}
\langle L \delta u \cdot \delta u \rangle|_{U_\epsilon} 
& = \int\limits_{-\epsilon}^{\epsilon} \int\limits_{\Sigma_0} \left( \frac{1}{\sqrt{\det g}}  \frac{\partial  \delta u}{\partial y^\beta} (\sigma + \tsigma)  g^{\alpha\beta} \sqrt{\det g}
\frac{\partial \delta u}{\partial y^\alpha}\right) \cdot \delta u  \sqrt{\det g} \,\d y'\d y^n\nonumber\\
&=\int\limits_{-\epsilon}^{\epsilon} \int\limits_{\Sigma_0} (\sigma + \tsigma)  g^{\alpha\beta}
\frac{\partial \delta u}{\partial y^\alpha} \frac{\partial  \delta u}{\partial y^\beta} \sqrt{\det g} \; \d y'\d y^n\nonumber\\
&\ge \delta\int_{-\epsilon}^\epsilon \int_{\Sigma_0} |\nabla_{y'} \delta u |^2 \; \; \d y'\d y^n,\nonumber\\
&\ge \delta\int_{-\epsilon}^\epsilon \int_{\Sigma_0} |\delta u|^2 \;  \;\d y'\d y^n\nonumber\\
&=\delta\|\delta u\|^2_{L^2(U_\epsilon)},
\label{L-Form}
\end{align}
where $\delta>0$ depends on an a priori lower bound of $\sigma + \tsigma$ and of $g^{\alpha,\beta}$ on $\overline\Omega$ for $1 \leq \alpha,\beta \leq n$. The second inequality is the Poincar\'e inequality. Since $U_\epsilon \subset \I'$ we obtain from \eqref{L-Form} 
\begin{align} \label{eq:integration-by-parts}
\|\delta u\|^2_{L^2(U_\epsilon)} & \leq C \langle L(\delta u), \delta u\rangle_{L^2(\I')},
\end{align}
By compactness, we can get a finite covering of $\overline{\I'}$ with neighborhoods $U_\epsilon$. Summing \eqref{eq:integration-by-parts} over the covering of $U_\epsilon$ we obtain 
\begin{equation}
\|u - \tu \|^2_{L^2(\I')} \leq C (L(u - \tu), u - \tu)_{L^2(\I')},
\end{equation}
for some constant $C$ which depends on $\Omega$, and the lower bound on $\sigma+\tsigma$. Finally from Cauchy-Schwartz inequality \eqref{StabilityEstimateForL} follows.

Finally the level set $\Sigma_0$ intersects tangentially the boundary of $\Omega$ as some point then the
analysis presented above could fail depending on the degeneracy of such interseccion. In particular it is not clear how to justify integration by parts on $U_\epsilon$ in the intersection is degenerate. To avoid this difficulty we will extend $\sigma, \tsigma, u$ and $\tu$ in two different ways to an open domain $\Omega_1$ containing $\overline{\I'}$. The first extension takes advantage of the boundary information while the second extension uses the regularity of the functions to extend the geometry and the operators in decomposition \eqref{eq:decomp-non-linear-problem}.

Since $\Gamma'$ is compactly supported in $\Gamma$ it follows that $\overline{\I'} \subset \I$. Thus there exists an open bounded $\Omega_1$ with smooth boundary, such that $\overline{\I'} \subset \Omega_1$ and $\Omega_1 \cap \Omega \subset \I$. First, using the fact that $(\sigma - \tsigma)|_{\Gamma} =0$ we can find $H^1(\Omega_1)$ extensions $\sigma_1$ and $\tsigma_1$ of $\sigma$ and $\tsigma$, respectively, such that $(\sigma_1 - \tsigma_1)=0$ in ${\Omega_1\setminus \Omega}$; (e.g., extend first $\sigma$ to some $\sigma_1\in C^2(\Omega_1)$, and define
$\tsigma_1 := \sigma_1$ in $\Omega_1 \setminus \Omega$. Since $\tsigma_1$ coincide with $\sigma_1$ on $\partial\Omega$, the difference $\tsigma_1-\sigma_1$, and thus $\tsigma_1$ lie in $ H^1(\Omega_1)$. Moreover, since $\sigma$ and $\tsigma$ are positive in $\overline{\Omega}$, we may assume that $\sigma_1$ and $\tsigma_1$ are positive in $\Omega_1$. Second, denote by $\sigma_2, \tsigma_2 \in C^2(\Omega_1)$ extensions of $\sigma$ and $\tsigma$, respectively. We can take $\Omega_1$ close enough to the boundary of $\Omega$ so that $\sigma_2$ and $\tsigma_2$ are positive in $\overline{\Omega_1}$.

In a similar way we extend $u$ and $\tu$ as follows. Since $(u -\tu)|_{\Gamma} = 0$, we can find $H^1(\Omega_1)$ extensions $u_1$ and $\tu_1$ of $u$ and $\tu$, respectively, such that $(u_1 - \tu_1)|_{\Omega_1\setminus \Omega} = 0$. Also denote by $u_2, \tu_2 \in C^2(\Omega_1)$ extensions of $u$ and $\tu$ respectively. We can take $\Omega_1$ close enough to the boundary of $\Omega$ so that $\nabla (u_2 + \tilde{u}_2) \neq 0$ on $\overline{\Omega_1}$. Finally we denote by $g_1$ a $C^1(\Omega_1)$ extension to $\Omega_1$ of the metric $g$ obtained in \eqref{LocalCoordLevelCurve}.

We then define $U_\epsilon$ in $\Omega_1$ instead of $\overline{\I'}$ by 
\[
U_\epsilon := \{(y',y^n)\in \Omega_1:  y'\in \Sigma_0, \mbox{and } y^n \in (-\epsilon,\epsilon) \}
\]
Now, even if the intersection of $\Sigma_0$ with $\partial \Omega_1$ is tangential the function that we are integrating is compactly supported, so we can justify integration by parts in \eqref{eq:integration-by-parts}. 
\end{proof}

Using the estimates in Lemmas \ref{basicEstimates} and \ref{basicEstimate2} we proceed to proving \eqref{J-Projection-Estimate}.  Recall from Schauder theory that $u, \tu \in C^{2,\alpha}(\Omega)$. By using in order the estimate \eqref{StabilityEstimateForPotential}, an interpolation estimate in \cite[Sec. 4.3.1 ]{MR1328645}, \eqref{StabilityEstimateForL}, and Proposition \ref{prop:decomp-non-linear-problem} we obtain
\begin{equation} \label{J-Estimate-H1}
\begin{split}
 \| \sigma - \tsigma \|_{L^2(V')}
  		&	\leq C\| u - \tu \|_{H^2(V')}  \\
  		&	\leq C  \| u - \tu \|^{\frac{\alpha}{2 + \alpha}}_{L^2(V')} \cdot\| u - \tu \|^{\frac{2}{2+\alpha}}_{H^{2+\alpha}(V')}
  		\leq C \|u - \tilde{u}\|^{\frac{\alpha}{2 + \alpha}}_{L^2(\I')} \\
  		&	\leq C \| L(u - \tu)\|^{\frac{\alpha}{2 + \alpha}}_{L^2(\I')} \\
  		&	=    C  \left|\left|2 \grad \cdot\Pi_{\nabla(u+\tu)} (\J(\sigma)  - \J(\tsigma)) \right|\right|^{\frac{\alpha}{2 + \alpha}}_{L^2(\I')} \\
\end{split}
\end{equation}
This finishes the proof of Theorem \ref{theor:global-stability-CDII}.

\begin{proof}[Proof of Theorem \ref{theorem:controlled-projection}]
Recall $\epsilon>0$ in \eqref{samesigmaOnGamma}.
The Schauder stability estimates for solutions of elliptic equations with $C^{\alpha}(\Omega)$-coefficients and $C^{2,\alpha}$ traces on the boundary, yield
\begin{align}\label{schauder}
\|u-\tu\|_{C^{2,\alpha}(\overline\Omega)}\leq M\epsilon,
\end{align}for a constant $M$ dependent only on $\Omega$ and a lower bound of $\sigma$.
In particular, for $\epsilon<1/M$ we have that 
\begin{align}\label{C1boundOfSUM}
\|\tu\|_{C^2(\overline\Omega)}\leq \|u\|_{C^2(\overline\Omega)}+1.
\end{align}Moreover, since $m:=\min_{\overline\Omega}|\nabla u|>0$, for $\epsilon<m/M$, we also have
\begin{align}\label{C1lowerBDsum}
|\nabla (u+\tu)|\geq 2|\nabla u|-|\nabla(u-\tu)|\geq 2m-M\epsilon \geq m>0.
\end{align}

Denote by $\delta \J = \J(\sigma) - \J(\tsigma)$. From Theorem \ref{theor:global-stability-CDII} we have that there exist a $C>0$ \emph{independent of $\epsilon$}, such that
\begin{equation}\label{eq:proof-2.1}
\| \sigma - \tsigma \|_{L^2(\Omega)} \leq C \left| \left| \frac{\delta \J \cdot \grad(u +\tu)}{|\grad(u +\tu)|} \right|\right|_{H^1(\Omega)}.
\end{equation}
Write
\begin{equation}\label{eq:proof-2.2}
 \frac{\delta \J \cdot \grad(u +\tu)}{|\grad(u +\tu)|} = 2\frac{|\grad u|}{|\grad(u +\tu)|}\left( \frac{\delta \J \cdot \grad u}{|\grad u|}\right) + \frac{\delta \J \cdot \grad(\tu-u)}{|\grad(u +\tu)|}.
\end{equation}
Note that all the terms in \eqref{eq:proof-2.2} are $C^1(\overline\Omega)$-regular.

By using the identity \eqref{Projection-Diff-of-Current-Density}, the second term in the right hand side of \eqref{eq:proof-2.2} rewrites as
\begin{equation}\label{eq:proof-2.3}
\frac{\delta \J \cdot \grad(\tu-u)}{|\grad(u +\tu)|} = \frac{1}{2} (\sigma -\tsigma)|\grad (u-\tu)| + \frac{1}{2}(\sigma +\tsigma)\frac{|\grad(u-\tu)|^2}{|\grad(u+\tu)|}.
\end{equation}
Using \eqref{Dirch-Prob-Sum}, \eqref{C1boundOfSUM}, and the Cauchy-Schwartz inequality we obtain
\begin{align}
\langle(\sigma + \tsigma)\nabla (\tu-u), \nabla(\tu -u)\rangle_{L^2(\Omega)}
	&	= - \langle\nabla \cdot (\sigma + \tsigma)\nabla (\tu-u), \tu -u\rangle_{L^2(\Omega)}\nonumber \\
 	&	= \langle\nabla \cdot (\tsigma - \sigma)\nabla (u +\tu), \tu -u\rangle_{L^2(\Omega)} \nonumber\\
 	&	\leq \|(\tsigma - \sigma)\grad (u+\tu)\|_{L^2(\Omega)} \|\grad(\tu-u)\|_{L^2(\Omega)} \nonumber \\
 	&	\leq  C \|\tsigma -\sigma \|_{L^2(\Omega)} \| \grad(\tu-u)\|_{L^2(\Omega)},\label{intermediate}
\end{align}
for $C$ dependent on the $C^1$-norm of $u$ but independent of $\epsilon$. From \eqref{intermediate}, for$$\delta :=\min_{\overline\Omega}(\sigma + \tsigma) >0,$$
we obtain
\begin{equation} \label{eq:proof-2.35}
\|\grad(u-\tu)\|_{L^2(\Omega)} \leq \frac{C}{\delta} \|\sigma -\tsigma \|_{L^2(\Omega)}.
\end{equation}

We claim that 
\begin{equation}\label{eq:proof-2.4}
\left|\left|\frac{\delta \J \cdot \grad(u-\tu)}{|\grad(u +\tu)|}\right|\right|_{H^1(\Omega)} \leq C \epsilon \| \sigma - \tsigma\|_{L^2(\Omega)}.
\end{equation}
for some $C$ independent of $\epsilon$. Indeed: one differentiation in \eqref{eq:proof-2.3} in any variable, which for simplicity we denote by subscript $x$, yields four terms. We treat each term individually, the appearing constants are independent of $\epsilon$. To bound the first term we use \eqref{samesigmaOnGamma}, the Cauchy inequality and \eqref{eq:proof-2.35} to obtain
\begin{align*}
\int_\Omega |(\sigma-\tsigma)_x|\,|\nabla (u-\tu)|dx\leq C_1\epsilon\|\sigma-\tsigma\|_{L^2(\Omega)}.
\end{align*}To bound the second term we use the Cauchy inequality and uniform bounds (dependent on $u$ only via \eqref{C1boundOfSUM}) on the second derivatives of $u$ and $\tu$ and obtain
\begin{align*}
\int_\Omega |\sigma-\tsigma|\,(|\nabla (u-\tu)|)_x dx\leq C_2\epsilon\|\sigma-\tsigma\|_{L^2(\Omega)}.
\end{align*}To bound the third term we use uniform bounds on the second derivatives of $u$ (and implicitly on $\tu$ via \eqref{C1boundOfSUM}) and on the first derivatives of $\sigma$ (and implicitly of $\tsigma$ via \eqref{samesigmaOnGamma}), \eqref{C1lowerBDsum} and \eqref{eq:proof-2.35} and obtain
\begin{align*}
\int_\Omega\left(\frac{\sigma+\tsigma}{|\nabla (u+\tu)|}\right)_x|\nabla (u-\tu)|^2dx\leq C_3\epsilon\|\sigma-\tsigma\|_{L^2(\Omega)}.
\end{align*}To bound the fourth term we use the Cauchy inequality and uniform bounds on $\sigma,\tsigma$, and $C^2$-bounds on $u$ (and implicitly via on $\tu$ \eqref{C1boundOfSUM}), \eqref{C1lowerBDsum}, and \eqref{eq:proof-2.35} and obtain
\begin{align*}
\int_\Omega\frac{\sigma+\tsigma}{|\nabla (u+\tu)|}(|\nabla(u-\tu)|)_x |\nabla (u-\tu)|dx\leq C_4\epsilon\|\sigma-\tsigma\|_{L^2(\Omega)}.
\end{align*}This finishes the proof of \eqref{eq:proof-2.4}.

From Equations \eqref{eq:proof-2.1}, \eqref{eq:proof-2.2} and \eqref{eq:proof-2.4} we get
\begin{equation} \label{eq:proof-2.6}
\| \sigma - \tsigma \|_{L^2(\Omega)} \leq C \left| \left| \frac{\delta \J \cdot \grad u}{|\grad u|} \right|\right|_{H^1(\Omega)} + C\epsilon \| \sigma - \tsigma\|_{L^2(\Omega)},
\end{equation}for a constant $C$ independent of $\epsilon$.

Finally, using \eqref{eq:proof-2.6} for $\epsilon < 1/C$ we obtain
\[
\| \sigma - \tsigma \|_{L^2(\Omega)} \leq \frac{C}{1-C\epsilon} \left| \left| \frac{\delta \J \cdot \grad u}{|\grad u|} \right|\right|_{H^1(\Omega)}
\]
which implies \eqref{eq:satbility-controlled-projection} and proves Theorem \ref{theorem:controlled-projection}.
\end{proof}

\begin{center}
\textbf{Acknowledgments}
\end{center}
We would express our gratitute to  Plamen Stefanov for his valuable suggestions during the writing of this paper. 

%\bibliography{/home/cmontalto/Documents/MATH/BibTex/MathBiblio-v1}
\bibliographystyle{amsplain}

\end{document}